\documentclass[a4paper,onecolumn,allowtoday]{quantumarticle}

\pdfoutput=1

\usepackage[utf8]{inputenc}
\usepackage[T1]{fontenc}
\usepackage{amsmath}
\usepackage{amssymb}
\usepackage{amsthm}
\usepackage{mathtools}
\usepackage{url}
\usepackage{booktabs}
\usepackage{enumitem}
\usepackage{xcolor}
\usepackage{graphicx}

\usepackage{appendix}
\usepackage[numbers,sort&compress]{natbib}
\usepackage{hyperref}

\hypersetup{
  colorlinks=true,
  linkcolor=blue!50!black,
  citecolor=blue!50!black,
  urlcolor=blue!50!black,
}

\theoremstyle{plain}
\newtheorem{theorem}{Theorem}[section]
\newtheorem{proposition}[theorem]{Proposition}
\newtheorem{lemma}[theorem]{Lemma}
\newtheorem{corollary}[theorem]{Corollary}
\newtheorem{conjecture}[theorem]{Conjecture}

\theoremstyle{definition}

\newtheorem{remark}[theorem]{Remark}

\newcounter{routealg}

\newcommand{\rt}{\mathrm{rt}}

\newcommand{\spec}{\mathrm{spec}}
\newcommand{\HGP}{\mathrm{HGP}}
\newcommand{\LP}{\mathrm{LP}}

\newcommand{\Tval}{T_{\mathrm{Valiant}}}
\newcommand{\Llay}{L_{\mathrm{layers}}}
\newcommand{\bbase}{\beta_{\mathrm{base}}}
\newcommand{\bhgp}{\beta_{\HGP}}

\newcommand{\bunion}{\beta_{\mathrm{union}}}
\newcommand{\Dbase}{D_{\mathrm{base}}}
\newcommand{\DT}{D_T}
\newcommand{\GT}{G_T}
\newcommand{\Gbase}{G_{\mathrm{base}}}

\newcommand{\Pb}{\mathbb{P}}

\title{Using Tanner Spectral Reduction to Improve Multi-Layer Optical Lattice Routing for Hypergraph-Product and Bivariate Bicycle qLDPC Codes}

\author{Joshua M. Courtney}
\affiliation{University of Georgia, Department of Physics and Astronomy}
\orcid{0009-0002-1631-4277}
\email{Joshua.Courtney1@uga.edu}

\date{July 6, 2026}

\begin{document}
\maketitle

\begin{abstract}
We characterize the Tanner graph spectrum of hypergraph-product (HGP) / lifted-product (LP) codes and bivariate-bicycle (BB) codes, informing qubit routing for three-dimensional reconfigurable qubit architectures. Syndrome-extraction routing depth on HGP/LP Tanner graphs reduces to a single SVD on the base parity-check matrix, using a spectral ratio $\beta_\text{HGP} = (1 + \beta_\text{base})/2$ where $\beta_\text{base} = \sigma_2(H)/\sigma_1(H)$ for the base parity-check matrix, and a diameter identity $D_T = 2 D_\text{base}$ where $D_\text{base}$ is the base Tanner graph diameter. Fourier spectral reduction reveals that the BB Tanner graph spectrum equals the union, over the $l \times m$ grid of characters of $\mathbb{Z}_l \times \mathbb{Z}_m$, of the singular values of a single $2 \times 2$ symbol matrix built from the two defining polynomials. This reduces spectral analysis from an $O((lm)^3)$ diagonalization of the $4lm$-node Tanner graph to $lm$ independent $2 \times 2$ SVDs. These results compose into a multi-layer three-dimensional AOL routing protocol with one-time setup cost $T_\text{Valiant} = O(\log N)$ atom rearrangements amortizable over a memory experiment of $R$ rounds. For a Tanner graph chromatic index $\chi'$ and $L_\text{layers}$ stacked AOL planes, the per-syndrome-cycle depth is $\lceil \chi'/L_\text{layers} \rceil$ AOL pattern activations with no atom motion, an $8\times$ step-count reduction at $L_\text{layers} \geq \chi' = 8$. Contingent on multi-layer AOL hardware, this yields an estimated $\sim50\text{--}300\times$ per-cycle wall-clock advantage over a single-layer AOD baseline (degrading to $\sim5\text{--}100\times$ under AOD-crosstalk overhead), reducing to equality in the single-layer limit. This paper therefore presents a route toward practical routing improvement for future quantum hardware incorporating multi-layer reconfigurable qubit architectures.
\end{abstract}

\section{Introduction}\label{sec:intro}

Recent demonstrations of quantum low-density parity check (qLDPC) code architectures~\cite{bravyi2024high,xu2024constant,breuckmann2021quantum} substantially reduce physical-qubit overhead for fault-tolerant quantum computation compared to surface codes, realizing a constant overhead~\cite{gottesman2013fault}.
These works use routing primitives moving qubits between data and ancilla positions across each syndrome extraction cycle.
While syndrome-extraction circuit depth (number of CNOT layers) is well-understood for these codes~\cite{tremblay2022constant,delfosse2021bounds}, atom-rearrangement depth required to support each circuit layer has yet to be characterized in terms of the underlying Tanner graph spectrum.
Impetus for this characterization stems from both hardware runtime bottlenecks in atom/ion reconfigurations and assessing the potential advantage of incorporating an effective third dimension to reduce routing overhead in neutral atom and trapped ion qubit architectures.

We give a closed-form formula $\bhgp = (1 + \bbase)/2$ and an exact diameter identity $\DT = 2 \Dbase$ for hypergraph-product/lifted-product (HGP/LP) code Tanner graphs (Sec.~\ref{sec:spectral}).
Using Fourier diagonalization (Theorem~\ref{thm:bb-reduction}), we reduce the bivariate bicycle (BB) Tanner graph spectrum to $lm$ independent $2 \times 2$ singular-value decompositions, and prove that no scalar $(1+\bbase)/2$-style one-liner holds for BB codes.
We give a syndrome-extraction protocol and depth bound for a 2D acousto-optic deflector (AOD) atom array augmented with $\Llay$ stacked 3D acousto-optic lens (AOL) planes, applicable uniformly to HGP, LP, and BB codes (Sec.~\ref{sec:protocol}).
Section~\ref{sec:numerical} gives a numerical comparison to Xu et al.~\cite{xu2024constant}'s scheme on the published HGP code family, under matched amortization assumptions, with BB codes (Sec.~\ref{sec:bb-measurement}) measured in the same framework.

We connect to two recent qLDPC architectures, situated among a growing body of neutral-atom qLDPC layout and routing work~\cite{pecorari2025high,poole2024architecture,constantinides2024optimal,zhou2025resource}.
Xu et al.~\cite{xu2024constant} implement HGP/LP codes on reconfigurable atom arrays with 2D divide-and-conquer scrambling.
Bravyi et al.~\cite{bravyi2024high} demonstrate bivariate bicycle (BB) codes on superconducting hardware whose Tanner graph decomposes into two edge-disjoint planar subgraphs (Sec.~\ref{sec:discussion}).
We refer to and pipeline Algorithm 3 from Xu et al.~\cite{xu2024constant}, requiring $2\delta_c = 8$ atom rearrangements per syndrome cycle on single-layer AOD hardware (constant in $N$). 
Our scheme on multi-layer 3D AOL with $\Llay$ stacked planes requires $\lceil 2\delta_c/\Llay \rceil$ AOL pattern activations per cycle (also constant in $N$), with no atom motion in the application phase. 
The one-time setup cost $\Tval$ to bring atoms into canonical positions grows as $\Theta(\log N)$ but amortizes over the $R \gtrsim 10^3$ syndrome rounds of a memory experiment. 
The per-cycle speedup is $\Llay$ for $\Llay \leq \chi' = 8$ and saturates at $8\times$ for $\Llay \geq 8$.
Both schemes are $O(1)$ per cycle in atom-array reconfigurations, and the multi-layer architecture contributes a constant-factor advantage.
For non-QC HGP/LP codes, both schemes remain applicable, since Xu et al.~\cite{xu2024constant}'s Algorithm 1 handles arbitrary 1D permutations as well as shifts, while our AOL patterns are arbitrary matchings rather than shifts, but remain pre-storable on Bluvstein~\cite{bluvstein2024logical}-style hardware.
With K\"onig-optimal edge coloring, $\chi'(\GT) = 2\delta_c$ for both QC and non-QC bases. 
The only non-QC cost becomes an offline preprocessing burden of computing and storing $\chi'$ AOL patterns (Sec.~\ref{sec:nonqc-results}). Table~\ref{tab:regimes} situates the multi-layer AOL protocol against existing routing regimes for syndrome extraction at qLDPC scale.

\begin{table}[ht]
\centering
\small
\begin{tabular}{p{0.2\linewidth}p{0.2\linewidth}p{0.1\linewidth}p{0.15\linewidth}p{0.2\linewidth}}
\toprule
Routing scheme & Per-cycle cost & Setup cost & Applicable codes & Hardware demonstrated \\
\midrule
Grid + nearest-neighbor shuttling & $\Theta(\sqrt{N})$ atom motions & --- & surface code & yes \cite{bluvstein2024logical} \\
Single-layer AOD, Xu et al. Alg.~3 \cite{xu2024constant} & $2\delta_c$ atom motions & --- & HGP / LP & yes \cite{bluvstein2024logical,xu2024constant} \\
Two edge-disjoint planar overlays \cite{bravyi2024high} & $\chi'(\GT^{\mathrm{BB}}) = 6$ couplings & --- & Bravyi BB codes only & yes (superconducting) \cite{bravyi2024high} \\
Multi-layer 3D AOL, this work, $\Llay \geq \chi'$ & $1$ AOL pattern activation & $\Tval = O(\log N)$ & HGP / LP / BB & partial ($\Llay \lesssim 2$ today \cite{bluvstein2024logical}) \\
\bottomrule
\end{tabular}
\caption{Four-regime comparison of syndrome-extraction routing schemes.
Per-cycle costs are constant in $N$ for all qLDPC schemes (surface-code grid scales as $\Theta(\sqrt N)$). 
Our scheme adds a one-time $\Tval = O(\log N)$ setup cost that amortizes to negligible per cycle within the first few rounds of a memory experiment (Theorem~\ref{thm:amortization}).
The setup-cost constant $k_{\mathrm{emp}} = \Tval / \log_2 N$ is empirically $\approx 0.55$ across the validated range $N \leq 100{,}000$ (Table~\ref{tab:k-emp}).
BB codes \cite{bravyi2024high} achieve a lower per-cycle constant ($\chi'^{\mathrm{BB}} = 6$ vs.\ $\chi'^{\mathrm{HGP},(3,4)} = 8$) by construction, as reflected in the saturation point (Sec.~\ref{sec:bb-measurement}).}
\label{tab:regimes}
\end{table}

\section{Background}\label{sec:background}

\subsection{Ramanujan graphs and routing}

Following previous work~\cite{courtney2026permutation}, a $d$-regular graph $G$ is Ramanujan~\cite{lubotzky1988ramanujan,marcus2018interlacing,margulis1988explicit} if its non-trivial eigenvalues satisfy $|\lambda_i| \leq 2\sqrt{d-1}$ for $i \neq 1, N$.
Random regular graphs are nearly Ramanujan~\cite{friedman2008proof} (see also the expander survey~\cite{hoory2006expander}).
The spectral ratio $\beta = \lambda_\star / d$  where $\lambda_\star = \max(|\lambda_2|, |\lambda_N|)$ controls the routing depth via Valiant's two-phase scheme~\cite{valiant1982scheme,valiant1981universal}.
There we find that for a $d'$-regular Ramanujan graph $G$ on $N \geq 16$ vertices with spectral ratio $\beta$:
\begin{equation}\label{eq:papier1-bound}
\rt(G) \leq \frac{4(d' + 6)}{d' \log_2(1/\beta)} \log_2 N + 19 \log_2 N,
\end{equation}
acting as a tight bound in the worst case but loose by $\sim 25\times$ for HGP code Tanner graphs (see Sec.~\ref{sec:tighter-bound}).

\subsection{HGP and LP qLDPC code Tanner graphs}

For a classical parity-check matrix $H$ (size $m \times n$) with bipartite Tanner graph $\Gbase$, the HGP code $\HGP[H, H']$ (Tillich--Z\'emor~\cite{tillich2013quantum}, generalizing the quantum hypergraph-product construction of Kovalev--Pryadko~\cite{kovalev2013quantum}) has $N = nn' + mm'$ data qubits, $m_X = mn'$ X-checks, and $m_Z = nm'$ Z-checks.
We focus on the self-product case $H = H'$ throughout this paper.

The full Tanner graph $\GT$ is bipartite, with qubits on one side and all checks on the other. 
We index nodes as
\begin{align*}
\text{L-qubits } (v, w) &\in V \times V, \quad \text{R-qubits } (c, c') \in C \times C, \\
\text{X-checks } (c, w) &\in C \times V, \quad \text{Z-checks } (v, c') \in V \times C,
\end{align*}
where $V = \{1, \ldots, n\}$ are variable nodes and $C = \{1, \ldots, m\}$ are check nodes of $\Gbase$.
Edges follow the standard HGP definition.

LP codes~\cite{panteleev2021quantum} are HGP codes built from a $\mathbb{Z}_L$ voltage cover of a base $H$.
The lifted-product and balanced-product constructions~\cite{panteleev2021quantum,breuckmann2021balanced,panteleev2022asymptotically} and quantum Tanner/expander codes~\cite{leverrier2022quantum,leverrier2015quantum,dinur2023good} are the route to asymptotically good qLDPC codes.
The lifted matrix $H_{\text{lift}}$ has size $mL \times nL$, and the LP Tanner graph is $\HGP[H_{\text{lift}}, H_{\text{lift}}]$.

\subsection{Atom-array hardware: 2D AOD and 3D AOL}

We take the hardware model of a 2D atom array on an $L \times L$ grid ($N = L^2$ atoms), with acousto-optic deflectors (AOD) providing arbitrary atom-pair connectivity through coherent transport~\cite{bluvstein2024logical, xu2024constant}, building on the reconfigurable Rydberg-array platform~\cite{saffman2010quantum,browaeys2020many,bernien2017probing,ebadi2021quantum,scholl2021quantum,barredo2016atom}.
3D acousto-optic lattices (AOL) augment this with multiple stacked ``layers'' of independent AOD patterns, denoted here as $\Llay$, a premise supported by demonstrated 3D atom-array assembly~\cite{barredo2018synthetic} and by recent 3D acousto-optic transport hardware~\cite{lu2026astigmatism,guo2025acousto,picard2025three}.

In previous work (Xu et al.~\cite{xu2024constant}), atom rearrangement uses a divide-and-conquer 1D scrambling algorithm achieving arbitrary permutations in $O(\log L)$ recursive levels.
Each level requires parallel atom motion of distance up to $L = \sqrt{N}$, with per-level wall-clock $\sim 3$~ms at $N \sim 10^4$ scaling as $O(N^{1/4})$ under constant-acceleration physics.
This $\sim 3$~ms figure is a \emph{long-range} cubic-spline transport time (Xu et al. Methods Eq.~7, a $\sim\!500~\mu$m sweep; cf.\ the $3$~ms AOD rearrangement sweep of Endres et al.~\cite{endres2016atom}), taken as the cost for the one-time full-array scrambling of the setup phase. We distinguish this transport time from the \emph{short-range} per-gate move used inside a syndrome cycle, whose demonstrated characteristic time is $\sim 200~\mu$s ($0.55~\mu$m\,$\mu$s$^{-1}$ cubic-velocity profile)~\cite{bluvstein2022quantum,bluvstein2024logical}.

\subsection{Comparison baseline}

Per Xu et al.~\cite{xu2024constant} Algorithm 3, pipelining requires $2\delta_c$ rearrangement layers per syndrome cycle, where $\delta_c$ is the chromatic index of the underlying classical parity-check Tanner graph.
For a $(3,4)$-biregular base, $\delta_c = 4$, giving:
\begin{equation}\label{eq:xu-baseline}
T^{\text{Xu et al.}}_{\text{cycle}} = 2 \delta_c = 8,
\end{equation}
constant in $N$ for fixed code structure.
Each per-cycle rearrangement layer is a short-range data$\leftrightarrow$ancilla move, whose demonstrated characteristic wall-clock is $\sim 200~\mu$s~\cite{bluvstein2022quantum,bluvstein2024logical}, giving $8 \times 200~\mu$s $\approx 1.6$~ms per syndrome cycle at $N \sim 10^4$. Gate-depth count of 16 entangling layers (``$4\delta_c$'' in their notation) is a separate quantity from the rearrangement-layer count.

\section{Spectral Characterization of qLDPC Tanner Graphs}\label{sec:spectral}

This section presents structural results for the Tanner graphs of two qLDPC families.

\subsection{Spectrum decomposition (Result 1a)}\label{sec:spec-decomp}

\begin{proposition}\label{prop:spec-decomp}
For an HGP code $\HGP[H, H]$ with rank $r$ base parity-check matrix $H$, every eigenvalue of the full Tanner graph adjacency matrix $A_{\GT}$ lies in the multiset
\begin{equation}\label{eq:spec-multiset}
\mathrm{spec}(A_{\GT}) \subseteq \{\pm(\sigma_i \pm \sigma_j) : 1 \leq i, j \leq r\} \cup \{\pm \sigma_k : 1 \leq k \leq r\} \cup \{0\}
\end{equation}
where $\sigma_1 \geq \cdots \geq \sigma_r > 0$ are the nonzero singular values of $H$.
We refer to the first set as \textbf{product modes} and to the second set as \textbf{boundary modes}.
\end{proposition}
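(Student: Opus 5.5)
The plan is to write $A_{\GT}$ in block form, change basis on each tensor factor using the singular vectors of $H$, and observe that $A_{\GT}$ then splits into tiny independent blocks, one per pair of singular-vector indices, each with an explicitly computable spectrum. Throughout, $H$ is the real $0/1$ matrix (an adjacency matrix is real, so we work over $\mathbb{R}$, not $\mathbb{F}_2$).

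\textbf{Step 1: block form.} Since $\GT$ is bipartite (qubits versus checks), we have
$A_{\GT}=\begin{pmatrix}0&B\\ B^{T}&0\end{pmatrix}$. Its spectrum consists of $\pm$ the singular values of $B$, together with extra zeros accounting for the size mismatch between the two sides. Order the rows of $B$ as (X-checks $C\times V$, Z-checks $V\times C$) and the columns as (L-qubits $V\times V$, R-qubits $C\times C$). The standard HGP incidences then give
\begin{equation*}
B=\begin{pmatrix} H\otimes I_n & I_m\otimes H^{T}\\ I_n\otimes H & H^{T}\otimes I_m\end{pmatrix}.
\end{equation*}
Concretely, X-check $(c,w)$ is adjacent to L-qubit $(v,w)$ when $H_{cv}=1$, and to R-qubit $(c,c')$ when $H_{c'w}=1$. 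The Z-checks are symmetric.

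\textbf{Step 2: change of basis.} Take a full real SVD of $H$. Let $\{v_i\}_{i\le n}$ be orthonormal vectors in $\mathbb{R}^n$ and $\{u_a\}_{a\le m}$ orthonormal vectors in $\mathbb{R}^m$ with
\begin{equation*}
Hv_i=\sigma_i u_i,\qquad H^{T}u_a=\sigma_a v_a,
\end{equation*}
where we set $\sigma_k:=0$ for $k>r$. Use the orthonormal bases $v_i\otimes v_j$ for L-qubits, $u_a\otimes u_b$ for R-qubits, $u_a\otimes v_j$ for X-checks, and $v_i\otimes u_b$ for Z-checks. This is an orthogonal similarity, so the spectrum is unchanged. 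A direct computation gives:
\begin{itemize}
\item $(H\otimes I)(v_i\otimes v_j)=\sigma_i\,u_i\otimes v_j$;
\item $(I\otimes H^{T})(u_i\otimes u_j)=\sigma_j\,u_i\otimes v_j$;
\item $(I\otimes H)(v_i\otimes v_j)=\sigma_j\,v_i\otimes u_j$;
\item $(H^{T}\otimes I)(u_i\otimes u_j)=\sigma_i\,v_i\otimes u_j$.
\end{itemize}
Hence, for each index pair $(i,j)$, the four basis vectors labelled $L(i,j)$, $R(i,j)$, $X(i,j)$, $Z(i,j)$ (whichever of them exist, given the ranges $i\le n$ or $i\le m$) span a subspace invariant under $A_{\GT}$. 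The transformed matrix is therefore block diagonal with blocks of size at most $4$.

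\textbf{Step 3: case analysis over $(i,j)$.}
\begin{itemize}
\item If $i,j\le r$, all four vectors exist and the block is a weighted $4$-cycle with biadjacency $\begin{pmatrix}\sigma_i&\sigma_j\\ \sigma_j&\sigma_i\end{pmatrix}$. Its eigenvalues are $\pm(\sigma_i+\sigma_j)$ and $\pm(\sigma_i-\sigma_j)$; these are the product modes.
\item If exactly one index, say $j$, exceeds $r$, then the weights $\sigma_j$ vanish. Some of the four vectors may also be absent: for example $u_j$ does not exist when $j>m$, and $v_j$ does not exist when $j>n$. What remains is a single edge of weight $\sigma_i$ plus isolated vertices, contributing $\pm\sigma_i$ (boundary modes) and $0$.
\item If both indices exceed $r$, every weight vanishes and the block contributes only $0$.
\end{itemize}
Taking the union over all blocks gives exactly the containment \eqref{eq:spec-multiset}.

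\textbf{Expected obstacle.} The main difficulty is bookkeeping in Step 3 when $m\ne n$. One must correctly track which of the four labels exist for index pairs lying in the ranges $r<k\le\min(m,n)$ and $\min(m,n)<k\le\max(m,n)$. One must also confirm that no block mixes different pairs $(i,j)$; the transposes in the R-qubit and Z-check couplings are the place where an index swap could sneak in. I would verify the four action formulas above carefully and then check that every basis vector is accounted for exactly once. Multiplicities are not needed, since the statement asserts only containment.
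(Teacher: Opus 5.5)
Your proof is correct. It uses the same underlying idea as the paper: tensor products of singular vectors of $H$ span small invariant subspaces. The difference is that you block-diagonalize $A_{\GT}$ itself rather than a Gram matrix.

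The paper works only on the qubit side. It computes $BB^{T}$, uses the mixed-product identity to obtain the off-diagonal block $2\,H^{T}\otimes H^{T}$, and finds $2\times 2$ invariant blocks on $\mathrm{span}\{w_i\otimes w_j,\,u_i\otimes u_j\}$ with entries $\sigma_i^2+\sigma_j^2$ and $2\sigma_i\sigma_j$ (your $v_k$ are its $w_k$). You also rotate the check sectors into the bases $u_a\otimes v_j$ and $v_i\otimes u_b$. Then $A_{\GT}$ splits into weighted $4$-cycles with biadjacency $M=\begin{pmatrix}\sigma_i&\sigma_j\\ \sigma_j&\sigma_i\end{pmatrix}$. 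The two pictures match exactly: the paper's block is $M^{T}M$. Its factor $2$ corresponds to the fact that both check vertices $X(i,j)$ and $Z(i,j)$ join $L(i,j)$ to $R(i,j)$.

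Your route gives explicit eigenvectors of $A_{\GT}$, including their check components, and makes multiplicities transparent. It also accounts for the $(n-m)^2$ qubit/check imbalance zeros automatically through the missing labels. The paper's route is more compact because it never diagonalizes the check space; it adds the imbalance zeros afterwards.

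One bookkeeping slip in Step 3 does not affect containment. When $i\le r<j\le\min(m,n)$, all four labels exist. The block is then two disjoint edges, $L(i,j)$--$X(i,j)$ and $R(i,j)$--$Z(i,j)$, each of weight $\sigma_i$, so $\pm\sigma_i$ appears twice. Your single-edge description is only accurate for $\min(m,n)<j\le\max(m,n)$.
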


\begin{proof}
The Tanner graph is bipartite between qubits and checks, so its adjacency is
$A_{\GT} = \left(\begin{smallmatrix} 0 & B \\ B^{T} & 0 \end{smallmatrix}\right)$
with $B$ the qubit-to-check biadjacency, and $\mathrm{spec}(A_{\GT}) = \{\pm \sigma : \sigma \in \mathrm{sv}(B)\} \cup \{0\}$.
Using the HGP stabilizer blocks $H_X = (H \otimes I_n \mid I_m \otimes H^{T})$ and $H_Z = (I_n \otimes H \mid H^{T}\otimes I_m)$ (Tillich--Z\'emor~\cite{tillich2013quantum}), $B = (H_X^{T}\mid H_Z^{T})$ decomposes over the two qubit sectors ($\mathcal{Q}_0 = V\times V$, $\mathcal{Q}_1 = C \times C$), and a direct computation (Appendix~\ref{app:blockSVD}) gives
\[
B B^{T} = \begin{pmatrix} (H^{T}H)\otimes I_n + I_n \otimes (H^{T}H) & 2\,H^{T}\otimes H^{T} \\[2pt] 2\,H\otimes H & (HH^{T})\otimes I_m + I_m \otimes (HH^{T}) \end{pmatrix}.
\]
Fix the SVD $H = \sum_k \sigma_k\, u_k w_k^{T}$ ($Hw_k = \sigma_k u_k$, $H^{T}u_k = \sigma_k w_k$, $1\le k\le r$). For each ordered pair $(i,j)$ with $\sigma_i,\sigma_j>0$, the two-dimensional space $\mathrm{span}\{\,w_i\otimes w_j,\; u_i\otimes u_j\,\}$ is $BB^{T}$-invariant, and in this basis
\[
BB^{T}\big|_{(i,j)} = \begin{pmatrix} \sigma_i^2 + \sigma_j^2 & 2\sigma_i\sigma_j \\ 2\sigma_i\sigma_j & \sigma_i^2 + \sigma_j^2 \end{pmatrix},
\qquad \text{eigenvalues } (\sigma_i \pm \sigma_j)^2 ,
\]
so $B$ has singular values $\sigma_i + \sigma_j$ and $|\sigma_i - \sigma_j|$ (the \textbf{product modes}). When one factor is a kernel/cokernel direction of $H$ ($\sigma_j = 0$) the off-diagonal coupling vanishes and $w_i\otimes w_{j_0}$ (resp.\ $u_i \otimes u_{j_0}$) is an eigenvector with eigenvalue $\sigma_i^2$, giving singular value $\sigma_i$ (the \textbf{boundary modes} $\pm\sigma_k$); pairs of kernel/cokernel directions give the zero modes. Collecting these accounts for the full multiset~\eqref{eq:spec-multiset}. The complete arithmetic, including the mixed-product identities $(H^{T}\!\otimes I_n)(I_m\otimes H^{T}) = H^{T}\!\otimes H^{T}$ that produce the factor $2$, is given in Appendix~\ref{app:blockSVD}.
\end{proof}

Across nine random $(3,4)$-biregular bases at $n = 12$ (HGP code on 441 nodes), we account for all eigenvalues by the multiset of Eq.~\eqref{eq:spec-multiset}.
Full-rank $H$ (rank $r = m = 9$) gives 75-84\% product-mode eigenvalues, while rank-deficient $H$ (e.g., $r = 7$ or $8$) gives 56-76\% product-modes with the remainder split between boundary and zero modes.
The mean product fraction over the tested seeds is $\approx 70 \pm 9\%$. 
Figure~\ref{fig:spectrum} visualizes this decomposition for the $[[225, 9, 4]]$ instance, and every measured eigenvalue lands on a predicted product or boundary position from the multiset Eq.~\eqref{eq:spec-multiset}.

\begin{figure}[ht]
\centering
\includegraphics[width=0.95\linewidth]{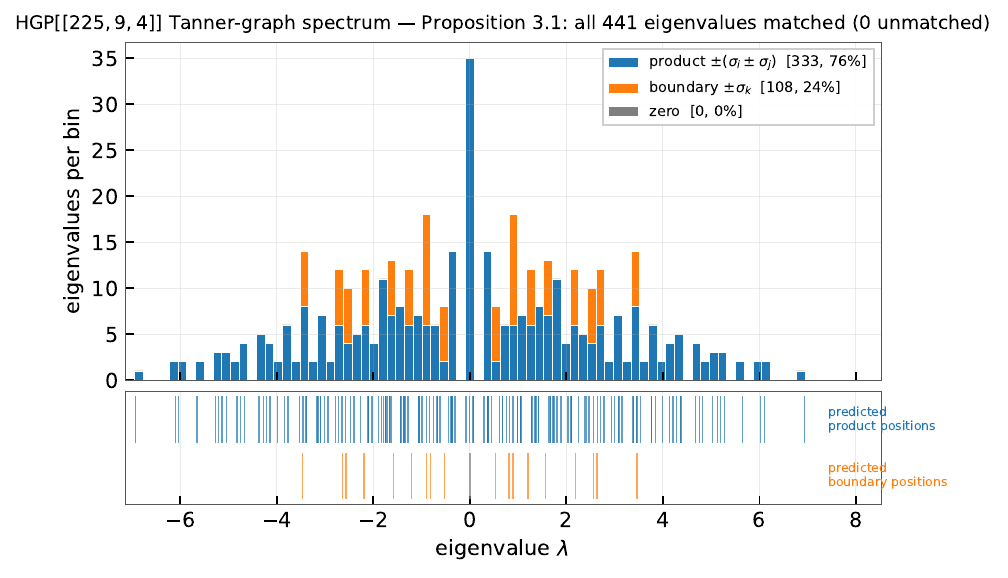}
\caption{Spectrum decomposition of the HGP code $[[225, 9, 4]]$ (seed 1, full-rank base $H$). 
Top: histogram of all 441 Tanner graph eigenvalues, stacked by Proposition~\ref{prop:spec-decomp} classification, as product modes $\pm(\sigma_i \pm \sigma_j)$ (blue, 333 eigenvalues, $76\%$) and boundary modes $\pm \sigma_k$ (orange, 108 eigenvalues, $24\%$). Zero unmatched eigenvalues.
Bottom: the predicted multiset positions as tick marks, aligned with the top-panel histogram.}
\label{fig:spectrum}
\end{figure}

\subsection{Closed-form spectral ratio (Result 1b)}\label{sec:closed-form}

\begin{theorem}\label{thm:closed-form}
For an HGP code $\HGP[H, H]$ whose base parity-check matrix $H$ has a simple top singular value ($\sigma_1(H) > \sigma_2(H)$, i.e.\ $\bbase = \sigma_2(H)/\sigma_1(H) < 1$), the non-trivial bipartite spectral ratio of the Tanner graph $\GT$ is
\begin{equation}\label{eq:closed-form}
\bhgp = \frac{1 + \bbase}{2}.
\end{equation}
Equivalently, the spectral gap is halved: $1 - \bhgp = (1 - \bbase)/2$.
\end{theorem}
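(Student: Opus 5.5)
Since $\GT$ is bipartite, its spectrum is symmetric: $\lambda_N = -\lambda_1$. The relevant non-trivial bipartite ratio is therefore $\bhgp = \lambda_2(A_{\GT})/\lambda_1(A_{\GT})$, with the trivial pair $\pm\lambda_1$ excluded. By Proposition~\ref{prop:spec-decomp}, every eigenvalue of $A_{\GT}$ lies in the multiset of product modes $\pm(\sigma_i\pm\sigma_j)$, boundary modes $\pm\sigma_k$, and zero. The proof thus reduces to identifying the two largest positive elements of this multiset that actually occur, with multiplicity, in the spectrum.

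\textbf{Step 1: the top eigenvalue.} The largest candidate is $\sigma_1+\sigma_1 = 2\sigma_1$, from the pair $(i,j)=(1,1)$. It is realized: the invariant subspace $\mathrm{span}\{w_1\otimes w_1,\,u_1\otimes u_1\}$ has $BB^{T}$-eigenvalue $(2\sigma_1)^2$. Hence $\lambda_1(A_{\GT}) = 2\sigma_1$. Because $\sigma_1$ is simple, this eigenvalue is simple. This is also consistent with Perron--Frobenius when $\GT$ is connected, and in the biregular case it equals $\sqrt{d_q d_c}$.

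\textbf{Step 2: the second eigenvalue.} The next candidates are $\sigma_1+\sigma_2$, realized by the ordered pairs $(1,2)$ and $(2,1)$, giving multiplicity at least two. We must check that every other candidate is no larger.
\begin{itemize}[leftmargin=*]
\item Any $\sigma_i+\sigma_j$ with $(i,j)\neq(1,1)$ has at least one index $\geq 2$, so $\sigma_i+\sigma_j\le\sigma_1+\sigma_2$.
\item Differences satisfy $|\sigma_i-\sigma_j|\le\sigma_1<\sigma_1+\sigma_2$, provided $\sigma_2>0$.
\item Boundary modes satisfy $\sigma_k\le\sigma_1\le\sigma_1+\sigma_2$.
\end{itemize}
Thus $\lambda_2 = \sigma_1+\sigma_2$. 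If $r=1$, then $\sigma_2=0$, and the second eigenvalue is the boundary mode $\sigma_1 = \sigma_1+\sigma_2$, provided $H$ has a kernel or cokernel direction, which holds whenever $H$ is not $1\times1$. The same formula survives.

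\textbf{Step 3: the ratio.} Dividing gives $\bhgp=(\sigma_1+\sigma_2)/(2\sigma_1) = (1+\bbase)/2$. Rearranging yields $1-\bhgp=(1-\bbase)/2$.

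\textbf{Main obstacle.} The delicate point is that Proposition~\ref{prop:spec-decomp} only gives containment of the spectrum in the multiset. The proof therefore needs the \emph{realization} direction for the two extremal values. I would supply this by explicitly exhibiting the invariant two-dimensional blocks from the proof of Proposition~\ref{prop:spec-decomp} for $(1,1)$ and $(1,2)$, and lifting the singular vectors of $B$ to eigenvectors $\left(\begin{smallmatrix} x \\ B^{T}x/\sigma\end{smallmatrix}\right)$ of $A_{\GT}$.

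A second subtlety is the convention for $\beta$. The background defines $\lambda_\star=\max(|\lambda_2|,|\lambda_N|)$, but for a bipartite graph $|\lambda_N|=\lambda_1$ is trivial. I would state explicitly that the bipartite ratio excludes the symmetric pair $\pm\lambda_1$. I would also note that the degree normalization $d$ should be replaced by $\lambda_1=2\sigma_1$, since $\GT$ is biregular rather than regular. This is the reading the theorem's phrase ``non-trivial bipartite spectral ratio'' refers to.
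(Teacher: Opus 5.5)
Your proposal is correct and follows the paper's proof essentially step for step. In both, $2\sigma_1$ is the simple Perron value, every other product, difference, and boundary mode is bounded by $\sigma_1+\sigma_2$, the rank-one case falls back on the boundary mode $\sigma_1$, and dividing by $2\sigma_1$ gives $(1+\bbase)/2$. Your extra care is a genuine improvement: the paper's proof reads Proposition~\ref{prop:spec-decomp} as an equality, which the invariant blocks and dimension count of Appendix~\ref{app:blockSVD} do justify, and it misses the $1\times 1$ exception you flag. One aside is wrong, though: $\GT$ is generally not biregular. For a $(3,4)$ base, qubits have degree $6$ or $8$ and checks have degree $7$. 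This does not affect the argument, since normalizing by $\lambda_1 = 2\sigma_1$ is exactly the paper's convention.
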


\begin{proof}
By hypothesis $\sigma_1(H) > \sigma_2(H)$, so $\bbase < 1$ and the top singular value is simple. 
(The excluded boundary case $\sigma_1 = \sigma_2$ is $\bbase = 1$, where the base graph is not a spectral expander and the routing bound of Eq.~\eqref{eq:papier1-bound} is vacuous and of no interest for routing.)
By Proposition~\ref{prop:spec-decomp} the Tanner spectrum is the multiset of Eq.~\eqref{eq:spec-multiset}.
Its Perron eigenvalue is the product mode with $i=j=1$, namely $2\sigma_1(H)$.
Because $\sigma_1$ is simple, no other pair $(i,j)$ attains $\sigma_i + \sigma_j = 2\sigma_1$, so this Perron value is itself simple.

The non-trivial second eigenvalue (in absolute value, excluding the $\pm 2\sigma_1$ Perron pair) is the maximum over:
\begin{enumerate}
\item Product modes $\pm(\sigma_i + \sigma_j)$ with $(i, j) \neq (1, 1)$:
maximum is $\sigma_1 + \sigma_2$.
\item Product modes $\pm(\sigma_i - \sigma_j)$: maximum is
$\sigma_1 - \sigma_r \leq \sigma_1$.
\item Boundary modes $\pm \sigma_k$: maximum is $\sigma_1$.
\end{enumerate}
For $\sigma_2 > 0$, $\sigma_1 + \sigma_2 > \sigma_1$, so the product
mode dominates. Then
\begin{equation*}
\bhgp = \frac{\sigma_1 + \sigma_2}{2 \sigma_1} = \frac{1 + \sigma_2/\sigma_1}{2} = \frac{1 + \bbase}{2}.
\end{equation*}
For the rank-1 edge case ($\sigma_2 = 0$, i.e., $\beta_{\rm base} = 0$), the second eigenvalue reduces to the boundary mode $\sigma_1$, giving $\beta_{\HGP} = 1/2 = (1 + 0)/2$.
\end{proof}

Numerically, we evidence this result with double-precision matching across 26 instances spanning base sizes $n \in \{12, 16, 20, 24\}$ and LP-style lift orders ($L \in \{2, 3, 4, 5, 6, 8\}$).
For example, on the [[225, 9, 4]] HGP code: $\bbase = 0.7609$ predicts $\bhgp = 0.8804$ to all measured digits, implicating computability of the full Tanner graph spectral structure from a single SVD on the base parity-check matrix. The closed form feeds directly into the routing-depth bound, making the ``single SVD'' claim quantitative.

\begin{corollary}[A base SVD determines the HGP routing depth]\label{cor:routing-depth}
Under the hypotheses of Theorem~\ref{thm:closed-form}, substituting $\bhgp = (1+\bbase)/2$ into the routing-depth bound~\eqref{eq:papier1-bound} gives
\begin{equation}\label{eq:routing-depth-base}
\rt(\GT) \;\leq\; \left(\frac{4(d'+6)}{d'\,\log_2\!\big(\tfrac{2}{1+\bbase}\big)} + 19\right)\log_2 N .
\end{equation}
Hence the routing depth of the $N$-node HGP Tanner graph is determined by the single base ratio $\bbase = \sigma_2(H)/\sigma_1(H)$, i.e.\ by one SVD of the (much smaller) base parity-check matrix $H$, with no diagonalization of $\GT$ itself.
\end{corollary}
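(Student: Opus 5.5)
The corollary is a direct substitution, so the plan is to verify that the hypotheses of the routing bound~\eqref{eq:papier1-bound} hold for $\GT$ and then insert Theorem~\ref{thm:closed-form}. First, I would fix the parameters. $N$ is the node count of $\GT$ (or the qubit count, whichever the bound~\eqref{eq:papier1-bound} is stated for), and $d'$ is the degree parameter appearing there. The spectral ratio $\beta$ in~\eqref{eq:papier1-bound} is the non-trivial bipartite ratio, i.e.\ the largest non-Perron singular value of the biadjacency divided by the Perron value. Theorem~\ref{thm:closed-form} gives exactly this quantity as $\bhgp=(1+\bbase)/2$, using Proposition~\ref{prop:spec-decomp} to locate the Perron pair $\pm 2\sigma_1$ and the second value $\sigma_1+\sigma_2$.

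Second, I would perform the substitution. We have $\log_2(1/\bhgp)=\log_2\!\big(2/(1+\bbase)\big)$, which is strictly positive because $\bbase<1$ by hypothesis. The first term of~\eqref{eq:papier1-bound} is therefore finite and equals the displayed coefficient, and factoring out $\log_2 N$ yields~\eqref{eq:routing-depth-base}. The claim that ``one SVD determines the depth'' then follows because the right-hand side depends only on $d'$, $N$ and $\bbase=\sigma_2(H)/\sigma_1(H)$. Computing $\bbase$ needs only the top two singular values of the $m\times n$ matrix $H$, never a diagonalization of the $(n^2+m^2+2mn)$-node $\GT$. Note also that $\bbase$ is unchanged in passing from $H$ to the lifted $H_{\text{lift}}$ only if one recomputes it, so for LP codes the relevant SVD is of $H_{\text{lift}}$.

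The main obstacle is the applicability of~\eqref{eq:papier1-bound}, which was stated for $d'$-regular Ramanujan graphs on $N\ge16$ vertices. For a $(3,4)$-biregular base, $\GT$ is biregular rather than regular: qubits have degree $\le 2\cdot 4$ or $2\cdot 3$ and checks have degree $7$. I would handle this in one of two ways. The first option is to read the corollary conditionally, ``under the hypotheses'' including those of~\eqref{eq:papier1-bound}, with $d'$ taken as the effective degree for which the normalized ratio $\lambda_\star/d'$ coincides with the bipartite ratio $\sigma_2(B)/\sigma_1(B)$. The second option is to argue that the Valiant two-phase analysis behind~\eqref{eq:papier1-bound} uses only the normalized spectral gap of the random walk, which for a bipartite biregular graph is governed by $\sigma_2(B)/\sigma_1(B)$. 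I would try the first option, since it keeps the corollary a pure substitution, and remark that $N\ge16$ holds for every nontrivial base, since $n\ge 4$ already gives $n^2\ge16$ nodes.
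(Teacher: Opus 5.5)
Your proposal is the paper's proof: the corollary is proved there by the one-line substitution of $\beta=\bhgp$ into Eq.~\eqref{eq:papier1-bound}, using $\log_2(1/\bhgp)=\log_2\big(2/(1+\bbase)\big)$ from Theorem~\ref{thm:closed-form}, exactly as in your second paragraph. The paper never checks the $d'$-regular Ramanujan hypotheses of Eq.~\eqref{eq:papier1-bound} for $\GT$, so your conditional reading is precisely the implicit assumption behind its ``immediate'' proof. Two caveats: for a $(3,4)$ base $\GT$ has qubit degrees $6$ and $8$ and check degree $7$, so it is not even biregular; and the conditional reading has real bite, since e.g.\ $\bhgp\approx 0.88$ for $[[225,9,4]]$ exceeds the Ramanujan threshold $2\sqrt{d'-1}/d'$ for every $d'\geq 4$.
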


\begin{proof}
Immediate from Eq.~\eqref{eq:papier1-bound} with $\beta = \bhgp$ and $\log_2(1/\bhgp) = \log_2\!\big(2/(1+\bbase)\big)$ by Theorem~\ref{thm:closed-form}.
\end{proof}

\begin{remark}[Genericity of the simple-$\sigma_1$ hypothesis]\label{rem:generic}
The hypothesis $\sigma_1(H)>\sigma_2(H)$ is generic rather than restrictive: a repeated top singular value is a non-generic coincidence, and it was simple in every random $(c,r)$-biregular base we tested. It holds throughout the Ramanujan regime $\bbase<1$ that the routing application targets.
\end{remark}

\subsection{Diameter identity (Result 2)}\label{sec:diameter}

\begin{theorem}\label{thm:diameter}
For an HGP code $\HGP[H, H]$ where the classical Tanner graph
$\Gbase$ of $H$ is connected with diameter $\Dbase \geq 1$, the full
Tanner graph $\GT$ has diameter
\begin{equation}\label{eq:diameter}
\DT = 2 \Dbase.
\end{equation}
\end{theorem}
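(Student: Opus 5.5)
The plan is to identify $\GT$ with the Cartesian product $\Gbase \,\square\, \Gbase$ and then use the standard fact that graph distance in a Cartesian product is additive over the coordinates.

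\textbf{Step 1: the Tanner graph is a Cartesian product.} Write the vertex set of $\Gbase$ as $V\cup C$. Under the indexing of Sec.~\ref{sec:background}, the node set of $\GT$ is exactly $(V\cup C)\times(V\cup C)$:
\begin{itemize}
\item $V\times V$ gives the L-qubits;
\item $C\times C$ gives the R-qubits;
\item $C\times V$ gives the X-checks;
\item $V\times C$ gives the Z-checks.
\end{itemize}
I will read the edges off the blocks $H_X=(H\otimes I_n\mid I_m\otimes H^{T})$ and $H_Z=(I_n\otimes H\mid H^{T}\otimes I_m)$ used in the proof of Proposition~\ref{prop:spec-decomp}.
\begin{itemize}
\item L-qubit $(v,w)$ is adjacent to X-check $(c,w)$ iff $H_{cv}=1$. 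This is a move in the first coordinate along an edge of $\Gbase$, with the second coordinate fixed.
\item L-qubit $(v,w)$ is adjacent to Z-check $(v,c')$ iff $H_{c'w}=1$. This is a move in the second coordinate.
\item R-qubit $(c,c')$ is adjacent to X-check $(c,w)$ iff $H_{c'w}=1$, and to Z-check $(v,c')$ iff $H_{cv}=1$.
\end{itemize}
Hence $(a,b)\sim(a',b')$ in $\GT$ iff either $a=a'$ and $b\sim b'$ in $\Gbase$, or $b=b'$ and $a\sim a'$ in $\Gbase$. That is precisely $\GT=\Gbase\,\square\,\Gbase$.

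This bookkeeping is the step I expect to need the most care. The Kronecker-index conventions (row versus column ordering, and $H$ versus $H^{T}$ in each block) must be matched so that every edge changes exactly one coordinate by exactly one $\Gbase$-edge, and no edge is missing. I would verify it entrywise, one block at a time.

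\textbf{Step 2: distance is additive.} I will show $d_{\GT}((a,b),(a',b'))=d(a,a')+d(b,b')$, where $d$ is distance in $\Gbase$.
\begin{itemize}
\item \emph{Upper bound.} Take a shortest $\Gbase$-path from $a$ to $a'$ in the first coordinate with $b$ held fixed. Then take a shortest path from $b$ to $b'$ in the second coordinate with $a'$ held fixed. Both paths exist because $\Gbase$ is connected.
\item \emph{Lower bound.} Project any path in $\GT$ onto each coordinate. Each edge advances exactly one projection by one $\Gbase$-edge and leaves the other unchanged. So the first projection is a walk from $a$ to $a'$ with at least $d(a,a')$ steps, the second is a walk from $b$ to $b'$ with at least $d(b,b')$ steps, and the steps are disjoint.
\end{itemize}
Since $\Gbase$ is connected, $\GT$ is connected as well.

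\textbf{Step 3: take the maximum.} Maximizing over both coordinates independently gives $\DT=\max_{a,a'}d(a,a')+\max_{b,b'}d(b,b')=2\Dbase$. The maximum is attained by choosing a diametral pair $(a,a')$ of $\Gbase$ and setting $b=a$, $b'=a'$.

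Bipartiteness imposes no parity obstruction here. Diametral endpoints may lie on either side of $\Gbase$, and the product node $(a,b)$ is always a valid node of $\GT$, since all four type combinations occur. The hypothesis $\Dbase\ge1$ only excludes the trivial single-vertex base.
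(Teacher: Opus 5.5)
Your proposal is correct and is essentially the paper's argument. The paper's upper bound concatenates coordinate-wise shortest paths, and its lower bound counts the type-1 and type-2 steps of an arbitrary $\GT$-path via the projections $\pi_1,\pi_2$, which is exactly your additivity of distance in $\Gbase\,\square\,\Gbase$. The differences are minor: you verify the Cartesian-product edge structure entrywise from the Kronecker blocks where the paper simply asserts it, and your uniform choice $(a,a),(a',a')$ removes the paper's three-way case split on whether the diametral pair is V-V, C-C, or V-C.
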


\begin{proof}
\textbf{Upper bound.} Every node $u \in V_{\GT}$ has well-defined projections
\[(\pi_1(u), \pi_2(u)) \in V_{\Gbase} \times V_{\Gbase}.\]
Every edge $(u, u') \in E_{\GT}$ satisfies one of $\pi_i(u) = \pi_i(u')$ for $i \in \{1, 2\}$, with the other coordinate's pair forming an edge in $\Gbase$. 
For any two HGP nodes $u, u'$, take a shortest base-graph path in coordinate 1 (length $\leq \Dbase$) followed by a shortest path in coordinate 2 (length $\leq \Dbase$). 
Each base-edge corresponds to one edge in $\GT$, giving a total $\leq 2 \Dbase$.

\textbf{Lower bound.} We exhibit an L-qubit, R-qubit, or X/Z-check pair $(u_0, u_k)$ with 
\[d_{\Gbase}(\pi_1(u_0), \pi_1(u_k)) = d_{\Gbase}(\pi_2(u_0),\pi_2(u_k)) = \Dbase.\] 
Following case-by-case by the parity of the diameter realizer in the bipartite base graph $\Gbase$:
\begin{itemize}
\item $\Dbase$ realized by V-V pair $(p, q)$. Take $u_0 = (p, p)$,
$u_k = (q, q) \in \mathcal{L}$ (L-qubits). Both projections give $\Dbase$.
\item $\Dbase$ realized by C-C pair $(p, q)$. Take $u_0 = (p, p)$,
$u_k = (q, q) \in \mathcal{R}$ (R-qubits).
\item $\Dbase$ realized by V-C pair $(p, q)$. Take $u_0 = (p, q)
\in \mathcal{Z}$ and $u_k = (q, p) \in \mathcal{X}$. Then $\pi_1(u_0) = p$,
$\pi_1(u_k) = q$, $d_{\Gbase}(p, q) = \Dbase$, and similarly for $\pi_2$.
\end{itemize}
In each case we bound the length of an \emph{arbitrary} $\GT$-path $u_0 = x_0, x_1, \dots, x_\ell = u_k$, which shows no shorter route exists. By the edge structure recalled in the upper bound, every step $x_{t}x_{t+1}$ changes one projection coordinate, by an edge of $\Gbase$: call it \emph{type-1} if it fixes $\pi_1$ (moving $\pi_2$ along a $\Gbase$-edge) and \emph{type-2} if it fixes $\pi_2$.
The coordinate $\pi_2$ changes only on type-1 steps, so the images $\pi_2(x_0), \pi_2(x_1), \dots, \pi_2(x_\ell)$, with consecutive repeats deleted, form a $\Gbase$-walk from $\pi_2(u_0)$ to $\pi_2(u_k)$.
Its length equals the number of type-1 steps, which is therefore $\geq d_{\Gbase}(\pi_2(u_0),\pi_2(u_k)) = \Dbase$.
Symmetrically the number of type-2 steps is $\geq d_{\Gbase}(\pi_1(u_0),\pi_1(u_k)) = \Dbase$. 
Since every step has one type, $\ell \geq 2\Dbase$.
As $u_0, u_k$ were chosen with both projection distances equal to $\Dbase$, this gives $\DT \geq 2\Dbase$. 
Combined with the upper bound, $\DT = 2\Dbase$.
\end{proof}

For random $(c, r)$-biregular base graphs, $\Dbase = O(\log n / \log d_{\text{base}})$ where $N_{\HGP} = \Theta(n^2)$, so 
\[\DT = 2 \Dbase = O\!\left(\log \sqrt{N_{\HGP}}\right)= O(\log N_{\HGP}).\]
The constant in front of $\log N_{\HGP}$ is $1/\log d_{\text{base}}$ via the base graph's expansion, which is tighter than the generic diameter bound's constant $1/\log_2(1/\beta_{\HGP})$, improving the constant on a mutual $O(\log N)$ scaling.

\subsection{Spectral reduction for bivariate-bicycle codes}\label{sec:bb-reduction}

The results above rest on the hypergraph-product structure $H_X = (H \otimes I \mid I \otimes H^T)$. 
As it stands, the results above do not extend to bivariate-bicycle (BB) codes.
Those are built from two commuting circulant polynomials rather than a product of a base code with itself. 
BB codes nonetheless carry a different abelian symmetry, being the translation group $\mathbb{Z}_l \times \mathbb{Z}_m$ of the underlying torus. 
This yields an analogous reduction of the routing-relevant spectrum via Fourier diagonalization.

A BB code~\cite{bravyi2024high} on $n = 2lm$ qubits (and its multivariate generalization~\cite{voss2025multivariate}) is specified by two polynomials
\[A(x,y), B(x,y) \in \mathbb{F}_2[x,y]/(x^l - 1,\, y^m - 1),\]
acting as $lm \times lm$ circulant $0/1$ matrices $A, B$.
Its Calderbank-Shor-Steane (CSS) check matrices are $H_X = (A \mid B)$ and $H_Z = (B^T \mid A^T)$, so
\begin{equation}\label{eq:bb-check}
\mathcal{H} = \begin{pmatrix} A & B \\ B^T & A^T \end{pmatrix},
\end{equation}
as the qubit-to-check biadjacency of the Tanner graph (size $2lm \times 2lm$).
The nonzero eigenvalues of the bipartite Tanner adjacency are $\pm \sigma_i(\mathcal{H})$.

\begin{theorem}[BB Tanner spectral reduction]\label{thm:bb-reduction}
Let $\omega_l = e^{2\pi i/l}$, $\omega_m = e^{2\pi i/m}$, and for each
character $(a,b) \in \mathbb{Z}_l \times \mathbb{Z}_m$ define the
polynomial evaluations
\begin{equation}\label{eq:bb-symbol-eval}
\hat{A}(a,b) = \!\!\sum_{(i,j)\in A}\!\! \omega_l^{ia}\,\omega_m^{jb},
\qquad
\hat{B}(a,b) = \!\!\sum_{(i,j)\in B}\!\! \omega_l^{ia}\,\omega_m^{jb},
\end{equation}
and the $2 \times 2$ Hermitian-coupled symbol matrix
\begin{equation}\label{eq:bb-symbol}
M(a,b) = \begin{pmatrix}
\hat{A}(a,b) & \hat{B}(a,b) \\[2pt]
\overline{\hat{B}(a,b)} & \overline{\hat{A}(a,b)}
\end{pmatrix}.
\end{equation}
Then the full nonzero Tanner graph spectrum of the BB code is
\begin{equation}\label{eq:bb-reduction}
\;\operatorname{spec}(A_{\GT}) \setminus \{0\}
= \bigcup_{(a,b)\in \mathbb{Z}_l \times \mathbb{Z}_m}
\bigl\{\, \pm s_1(a,b),\ \pm s_2(a,b) \,\bigr\},
\end{equation}
where $s_1(a,b) \geq s_2(a,b) \geq 0$ are the two singular values of $M(a,b)$. 
The routing-relevant spectral ratio $\beta_{\mathrm{BB}}$ is obtained from $lm$ independent $2 \times 2$ singular-value decompositions, an $O(lm)$ computation in place of the $O\left((lm)^3\right)$ diagonalization of the $4lm$-node Tanner graph.
\end{theorem}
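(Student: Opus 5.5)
The plan is to block-diagonalize $\mathcal{H}$ with the unitary Fourier transform $F = F_l \otimes F_m$ on $\mathbb{C}^{\mathbb{Z}_l\times\mathbb{Z}_m}$, then read off singular values fiber by fiber. As in the proof of Proposition~\ref{prop:spec-decomp}, the Tanner adjacency is $\left(\begin{smallmatrix} 0 & \mathcal{H} \\ \mathcal{H}^{T} & 0\end{smallmatrix}\right)$, so its nonzero spectrum is $\{\pm\sigma : \sigma$ a nonzero singular value of $\mathcal{H}\}$. It therefore suffices to show that the multiset of singular values of $\mathcal{H}$ equals $\bigcup_{(a,b)}\{s_1(a,b), s_2(a,b)\}$; the zero singular values then drop out on both sides.

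The first step uses the fact that $A$ and $B$ lie in the group algebra of $\mathbb{Z}_l\times\mathbb{Z}_m$, so they are simultaneously diagonalized by the characters $\chi_{a,b}(i,j) = \omega_l^{ia}\omega_m^{jb}$. Each $\chi_{a,b}$ is an eigenvector of $A$ with eigenvalue $\hat{A}(a,b)$ up to the sign convention of the shift (using $\omega^{-ia}$ versus $\omega^{ia}$). A mismatch in that convention only relabels $(a,b)\mapsto(-a,-b)$, which permutes the union and leaves it invariant, so the convention can be fixed to match Eq.~\eqref{eq:bb-symbol-eval}. Because $A$ is real, $A^{T} = A^{*}$ acts on $\chi_{a,b}$ by $\overline{\hat{A}(a,b)}$, and the same holds for $B$ and $B^{T}$.

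The second step conjugates: $(F\oplus F)^{*}\,\mathcal{H}\,(F\oplus F) = \left(\begin{smallmatrix} D_A & D_B \\ \overline{D_B} & \overline{D_A}\end{smallmatrix}\right)$ with diagonal $D_A, D_B$. A permutation that groups coordinates by character then turns this into $\bigoplus_{(a,b)} M(a,b)$, where $M(a,b)$ is exactly the symbol matrix of Eq.~\eqref{eq:bb-symbol}. Unitary equivalence preserves singular values, and the singular values of a direct sum are the union of the blocks' singular values, so the multiset identity, and hence Eq.~\eqref{eq:bb-reduction}, follows. The complexity claim is then bookkeeping. Evaluating $\hat A,\hat B$ at all $lm$ characters costs $O(lm)$ for fixed-weight polynomials, or at most $O(lm\log(lm))$ by FFT. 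Each $2\times 2$ SVD costs $O(1)$, and $\beta_{\mathrm{BB}}$ is the ratio of the second-largest to the largest singular value in the union.

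The main obstacle is the bookkeeping in the second step. Each block must be $M(a,b)$ itself, not a pairing of $(a,b)$ with $(-a,-b)$. This requires checking that $B^{T}$ and $A^{T}$ act on the \emph{same} character $\chi_{a,b}$ by the conjugate eigenvalues. That works precisely because $A$ and $B$ are real, so their transposes equal their adjoints, and because adjoints share the same eigenvectors for normal matrices. I would state this carefully, and note that the sign convention affects only the indexing, not the multiset.
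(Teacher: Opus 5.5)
Your proposal is correct and follows essentially the same route as the paper: you conjugate $\mathcal{H}$ by the block unitary $\operatorname{diag}(F,F)$ with $F=F_l\otimes F_m$, regroup coordinates by character into $\bigoplus_{(a,b)}M(a,b)$, and combine unitary invariance of singular values with the $\pm\sigma$ structure of the bipartite adjacency. Your added care on three points tightens details the paper treats only briefly: the shift sign convention only relabels $(a,b)\mapsto(-a,-b)$; $A^{T}=A^{*}$ acts on the same character with the conjugate eigenvalue; and zeros must be discarded on both sides, which is genuinely needed because $s_2(0,0)=|w_A-w_B|=0$ whenever $w_A=w_B$, as in the Bravyi codes.
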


\begin{proof}
The two-dimensional Fourier transform $F = F_l \otimes F_m$ (unitary, with $F_l$ the $l$-point FFT) diagonalizes all circulants in $\mathbb{F}_2[x,y]/(x^l-1, y^m-1)$ lifted to the reals. 
For the polynomial $A$, 
\[F A F^{*} = \operatorname{diag}_{(a,b)}\hat{A}(a,b),\] 
since $\chi_{a,b}(i,j) = \omega_l^{ia} \omega_m^{jb}$ is a simultaneous eigenvector of all torus translations.
Transposition of a circulant reverses each translation, so
\[F A^{T} F^{*} = \operatorname{diag}_{(a,b)} \overline{\hat{A}(a,b)},\]
and likewise for $B$.
Applying the block-unitary $\operatorname{diag}(F, F)$ to $\mathcal{H}$ of
Eq.~\eqref{eq:bb-check} simultaneously diagonalizes all four blocks. 
After the permutation that groups the two copies of each character $(a,b)$, the result is block-diagonal with the $2 \times 2$ blocks $M(a,b)$ of Eq.~\eqref{eq:bb-symbol}. 
Singular values are invariant under the unitaries $F$ and grouping permutation, so the singular spectrum of $\mathcal{H}$ is the disjoint union of the singular spectra of $M(a,b)$. 
The bipartite Tanner adjacency has eigenvalues $\pm \sigma_i(\mathcal{H})$ together with zeros from the qubit/check count imbalance, giving Eq.~\eqref{eq:bb-reduction}.
\end{proof}

Unlike Theorem~\ref{thm:closed-form}, the BB ratio $\beta_{\mathrm{BB}}$ has no closed form depending only on individual block ratios $\beta_A, \beta_B$. 
The Perron value $\sigma_1 = s_1(0,0) = w_A + w_B$ (where $w_A, w_B$ are the polynomial weights) is fixed by the weights, but the second-largest singular value is attained at a nonzero character $(a,b)$ whose location depends on monomial exponents. 
If we assert a transplanted form \[\beta_{\mathrm{BB}} = (1 + \max(\beta_A, \beta_B))/2,\] 
we overshoot the four published Bravyi codes by $0.12$--$0.17$ and the result fails to hold numerically for non-degenerate codes.

We check Theorem~\ref{thm:bb-reduction} against direct diagonalization of the full Tanner adjacency across 52 BB codes, including the four published Bravyi instances~\cite{bravyi2024high} and 48 random codes spanning polynomial weights $w \in \{2,3,4\}$ (Tanner max-degree $\Delta = 2w \in \{4,6,8\}$) and torus sizes up to $lm = 144$ ($N_{\GT} = 576$). 
The maximum eigenvalue discrepancy over the whole sweep was $3.1 \times 10^{-14}$. 
An independent re-derivation taking the singular values of the explicit check matrix~\eqref{eq:bb-check} (rather than the Tanner adjacency) agreed to $5.6 \times 10^{-14}$ on a fresh sample up to $N_{\GT} = 800$.
For the published Bravyi codes the reduction reproduces the measured ratios of Sec.~\ref{sec:bb-measurement}
($\beta_{\mathrm{BB}} = 0.667, 0.872, 0.828, 0.828$) to all reported digits.

As with the product-code results, the routing-relevant spectrum of a BB code is computable without ever
instantiating the full Tanner graph, making spectral screening of BB polynomial families tractable at scales where full diagonalization is infeasible.

\subsection{Random-voltage Ramanujan fraction for LP codes}\label{sec:ram-frac}

For LP codes built from random voltage assignments on a fixed 3×5 base support mask (mirroring the family in Xu et al.~\cite{xu2024constant}), we measured the fraction of voltage assignments yielding Ramanujan-Tanner graphs (Feng--Li bipartite condition $\sigma_2 \leq \sqrt{c-1} + \sqrt{r-1}$ on the lifted classical Tanner graph) across a sweep of lift orders $L \in \{2, \ldots, 16\}$.

\begin{table}[ht]
\centering
\begin{tabular}{rrcc}
\toprule
$L$ & $N_{\LP}$ & Ramanujan fraction (mean $\pm$ std) & Mean $\beta_{\text{lift}}$ (mean $\pm$ std) \\
\midrule
2  & 136  & $87.3\% \pm 1.1\%$  & $0.824 \pm 0.003$ \\
3  & 306  & $96.9\% \pm 0.9\%$  & $0.839 \pm 0.003$ \\
4  & 544  & $84.3\% \pm 1.6\%$  & $0.866 \pm 0.003$ \\
5  & 850  & $89.9\% \pm 1.1\%$  & $0.867 \pm 0.002$ \\
6  & 1224 & $79.0\% \pm 3.3\%$  & $0.887 \pm 0.003$ \\
8  & 2176 & $74.2\% \pm 1.7\%$  & $0.895 \pm 0.001$ \\
12 & 4896 & $66.8\% \pm 3.1\%$  & $0.910 \pm 0.002$ \\
16 & 8704 & $57.6\% \pm 2.1\%$  & $0.917 \pm 0.002$ \\
\bottomrule
\end{tabular}
\caption{Random-voltage Ramanujan fraction for LP codes. 
Each cell aggregates over 5 master seeds $\times$ 500 voltage samples per seed (2500 trials total per $(L, N_{\LP})$ pair).
Reported uncertainties are the cross-seed standard deviation.}
\label{tab:ram-frac}
\end{table}

This phenomenon is also found in Courtney~\cite{courtney2026permutation} Sec.~ 7's covering-tower analysis on the Fano plane (94\% Ramanujan at $k=2$), supporting random-voltage code search as a practical strategy for producing routing-friendly LP code instances.

\section{Multi-Layer AOL Routing Protocol}\label{sec:protocol}

\subsection{Algorithm}\label{sec:algorithm}

We apply a one-time Valiant routing on the union of $\Llay$ overlay graphs to canonical 2D-grid positions, followed by $R$ syndrome cycles in which each chromatic color of $\GT$ is loaded into one of the stacked AOL planes and fired in parallel. 
The protocol is stated as Algorithm~\ref{alg:multilayer} and illustrated in Figure~\ref{fig:architecture}.

\begin{figure}[ht]
\centering
\includegraphics[width=0.98\linewidth]{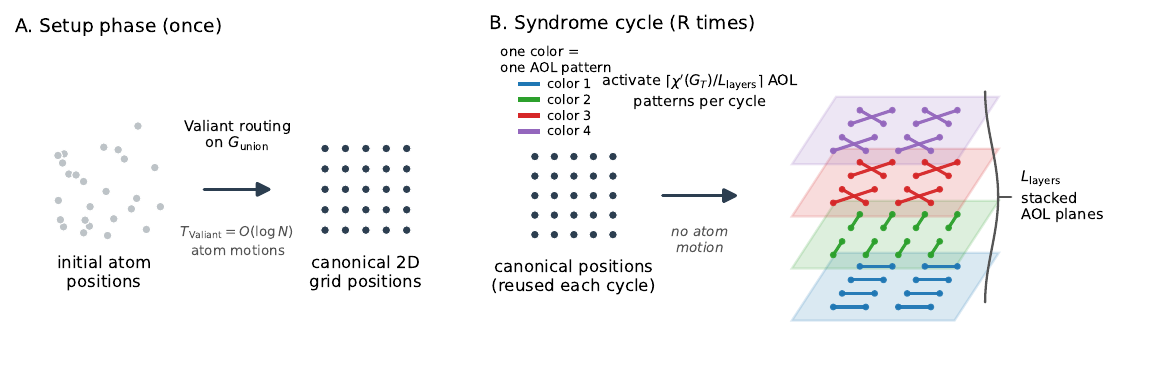}
\caption{Multi-layer 3D AOL routing protocol (Algorithm~\ref{alg:multilayer}, Sec.~\ref{sec:algorithm}).
\textbf{A.}~Setup phase: atoms initialized in arbitrary positions are Valiant-routed on the multi-layer overlay $G_{\mathrm{union}} = \bigcup_{\ell=1}^{\Llay} G_\ell$ to canonical 2D-grid positions supporting Tanner graph $\GT$ chromatic decomposition.
One-time cost $\Tval = O(\log N)$ atom motions (Theorem~\ref{thm:amortization}).
\textbf{B.}~Syndrome cycle (repeated $R$ times): from the canonical positions, each of $\chi'(\GT)$ chromatic colors of a $\GT$ edge coloring is loaded into one of $\Llay$ stacked AOL planes.
The $\Llay$ planes fire simultaneously, completing all $\chi'$ colors in $\lceil \chi'(\GT) / \Llay \rceil$ serial AOL pattern activations.
No atom motion in this phase. For $\Llay \geq \chi'$ the per-cycle cost saturates to a single AOL pattern activation, predicting a $\sim 50$--$300\times$ wall-clock advantage over Xu et al.~\cite{xu2024constant} Algorithm 3 in the canonical-position memory-experiment regime (Sec.~\ref{sec:wallclock-comparison}).}
\label{fig:architecture}
\end{figure}

\begin{figure}[ht]
\fbox{\parbox{\textwidth}{
\refstepcounter{routealg}\label{alg:multilayer}\textbf{Algorithm~\theroutealg.} \textsc{MultiLayerRoute}($\mathcal{C}$, $R$)\\
\textbf{Input:} HGP/LP code $\mathcal{C}$ with Tanner graph $\GT$, number of syndrome rounds $R$, hardware specs $(\Llay, d_0)$.\\
\textbf{Output:} parallel atom-rearrangement schedule for $R$ rounds.

\begin{enumerate}[leftmargin=2em]
\item \textbf{Setup phase} (executed once):
  \begin{enumerate}[leftmargin=1em]
  \item Build $G_{\text{union}} = \cup_{\ell=1}^{\Llay} G_\ell$ where
    each $G_\ell$ is a random $d_0$-regular overlay graph on the data
    atoms.
  \item Apply Valiant--LMR routing~\cite{valiant1982scheme,leighton1994packet,leighton1999fast,upfal1992log} on $G_{\text{union}}$ to permute
    atoms to canonical 2D-grid positions supporting the chromatic
    decomposition of $\GT$. Depth: $\Tval(G_{\text{union}})$ atom
    rearrangements.
  \end{enumerate}

\item \textbf{Syndrome cycle} (repeated $R$ times, no atom motion):
  for each batch of $\Llay$ chromatic colors, activate the $\Llay$
  pre-computed AOL pattern channels and apply CZ gates simultaneously.
  Depth: $\lceil \chi'(\GT) / \Llay \rceil$ AOL pattern activations.
\end{enumerate}

\textbf{Total depth:}
$T_{\text{total}} = \Tval(G_{\text{union}}) + R \cdot \lceil \chi'/\Llay \rceil$.\\
\textbf{Per-cycle depth (amortized):}
$T_{\text{total}}/R = \lceil \chi'/\Llay \rceil + \Tval/R \to \lceil \chi'/\Llay \rceil$.
}}
\end{figure}

\subsection{Amortization theorem}\label{sec:amortization}

\begin{theorem}\label{thm:amortization}
Let $\HGP[H, H]$ be a quasi-cyclic (block-circulant) HGP code with chromatic index $\chi'(\GT)$, on a hardware platform supporting $\Llay$ stacked AOL planes. 
Algorithm~1 has the following cost decomposition for a memory experiment of $R$ syndrome rounds:
\begin{equation}
    \begin{split}
        T_{\text{setup}}      &= \Tval(G_{\text{union}}) \quad \text{(one-time)}, \\
T_{\text{per-cycle}}  &= \lceil \chi'(\GT) / \Llay \rceil \quad \text{(repeated $R$ times)}, \\
\label{eq:amortization}
T_{\text{total}}/R    &= \lceil \chi'(\GT) / \Llay \rceil + \Tval/R.
    \end{split}
\end{equation}

The per-cycle term is constant in $N$. The setup term grows as
$\Theta(\log N)$ but contributes $\Tval/R \to 0$ as $R$ grows.
\end{theorem}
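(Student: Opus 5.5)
The plan is to prove the three lines of the cost decomposition separately, tracking Algorithm~\ref{alg:multilayer}'s two phases, and then to establish the two asymptotic claims: the per-cycle term is constant and the setup term is $\Theta(\log N)$.

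\textbf{Per-cycle term.} I would start here because it is the cleanest. By K\"onig's edge-colouring theorem, the bipartite Tanner graph $\GT$ admits a proper edge colouring with exactly $\chi'(\GT)=\Delta(\GT)$ colours. Each colour class is a matching, so every qubit takes part in at most one CZ per class, and the class can be loaded as a single AOL pattern. Grouping the $\chi'$ classes into batches of $\Llay$ and firing each batch simultaneously across the stacked planes uses $\lceil \chi'/\Llay\rceil$ activations. No atom moves, because the canonical positions fixed in the setup phase support every pattern. For the HGP self-product with a $(c,r)$-biregular base, $\Delta(\GT)$ is a sum of base degrees, for example $2\delta_c$, which depends only on the code family and not on $N$. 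This gives the constancy in $N$. The quasi-cyclic hypothesis enters here: it ensures that the same $\chi'$ patterns serve every lift size, so they can be pre-stored.

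\textbf{Setup term.} Setup occurs once, so its cost is additive, which yields the identity $T_{\mathrm{total}}=\Tval+R\lceil\chi'/\Llay\rceil$; dividing by $R$ gives Eq.~\eqref{eq:amortization}. For the upper bound $\Tval=O(\log N)$, I would show that $G_{\mathrm{union}}$ is a $d'$-regular graph with $d'=\Llay d_0$ and a spectral ratio $\bunion$ bounded away from $1$. A union of independent random $d_0$-regular graphs is nearly Ramanujan by Friedman's theorem, so with high probability $\bunion\le 2\sqrt{d'-1}/d'+\epsilon<1$. Applying Eq.~\eqref{eq:papier1-bound} then gives $\rt(G_{\mathrm{union}})\le C(d',\bunion)\log_2 N$ with $C$ independent of $N$.

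\textbf{Matching lower bound.} I would use a counting and diameter argument. Since $G_{\mathrm{union}}$ has bounded degree $d'$, its diameter is at least $\log_{d'-1}N-O(1)$. A worst-case initial placement forces some atom to travel that distance, and each rearrangement moves an atom along at most one edge, so $\Tval=\Omega(\log N)$. Finally, $\Tval/R\to0$ as $R\to\infty$ for fixed $N$. Alternatively, it suffices to take $R=\omega(\log N)$.

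The main obstacle is making the setup bound rigorous. Eq.~\eqref{eq:papier1-bound} is stated for Ramanujan graphs, while a random union is only nearly Ramanujan and only with high probability. I would handle this by stating the theorem with high probability over the overlay choice, or by re-sampling until the spectral check $\bunion\le 2\sqrt{d'-1}/d'+\epsilon$ passes. That check is cheap, and it only alters the constant in Eq.~\eqref{eq:papier1-bound} through $\log_2(1/\bunion)$.
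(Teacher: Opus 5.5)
Your proof is correct, and it is more complete than the paper's argument, which is only a sketch.

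\textbf{Per-cycle term.} The paper gets this term from the quasi-cyclic structure. Each colour class of $\GT$ is an axis-aligned shift on the canonical grid, so each stacked plane executes one shift and $\Llay$ shifts fire in parallel, giving $\lceil \chi'/\Llay \rceil$ pulse rounds. The paper adds that canonical positions survive the ancilla measure-and-reset cycle.

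You instead take the colouring from K\"onig's theorem for an arbitrary bipartite $\GT$. This makes the QC hypothesis essentially unnecessary, and it coincides with the non-QC extension the paper discusses after the theorem. The price is an extra hardware assumption: that an arbitrary matching can be loaded as a single AOL pattern. In the paper, the QC hypothesis is what makes ``one activation per colour'' physically natural, since a uniform translation is exactly what one AOD/AOL plane implements. Your stated role for QC (that the same patterns serve every lift size) is not the role the paper assigns it.

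\textbf{Setup term.} Here you supply what the paper omits. The paper asserts only that one Valiant--LMR routing is paid once and takes the $\Theta(\log N)$ growth as given. You give an upper bound from a near-Ramanujan union plus Eq.~\eqref{eq:papier1-bound}. You also give a matching lower bound from the Moore bound on the diameter of a bounded-degree graph. That lower bound is what justifies $\Theta$ rather than merely $O$, with $\Llay$ and $d_0$ held fixed.

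\textbf{Residual caveats.} Eq.~\eqref{eq:papier1-bound} is stated for Ramanujan graphs. Resampling until $\bunion \le 2\sqrt{d'-1}/d' + \epsilon$ is therefore enough only if that bound extends to non-Ramanujan $\beta<1$; otherwise you must resample until the union is exactly Ramanujan. In addition, Friedman's theorem reaches a union of independent random $d_0$-regular graphs only through a contiguity argument. Such a union is not a uniformly random $d'$-regular graph and may contain multi-edges. You flag part of this yourself, and none of it changes the conclusion.
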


\begin{proof}[Proof sketch]
Setup phase: one Valiant--LMR routing brings atoms to canonical 2D positions supporting the HGP product structure, paid once at the start of the memory experiment.

Per cycle: the chromatic decomposition of $\GT$ is implementable by AOL pulse patterns with no atom motion (atoms remain in canonical positions between cycles). 
For a quasi-cyclic base $H$ this consists of $\chi'$ shift operations on the canonical 2D grid.
With $\Llay$ stacked planes, $\Llay$ different shifts execute in parallel, completing all $\chi'$ colors in $\lceil \chi'/\Llay \rceil$ serial pulse rounds.
The atoms do not need to be re-routed between syndrome rounds because canonical positions are preserved by the ancilla measurement-and-reset cycle.
\end{proof}

The shift-based decomposition is simplest when $H$ is quasi-cyclic.
Each chromatic color of $\GT$ is implementable as a single axis-aligned translation on the 2D embedding, giving a simple shift pattern to each AOL plane.
For non-QC codes (e.g., HGP from random biregular bases), chromatic colors are arbitrary matchings rather than shifts.
By K\"onig's theorem~\cite{konig1916graphen} (see also Vizing~\cite{vizing1964estimate} and~\cite{diestel2024extremal}), $\chi'(\GT) = \Delta(\GT) = 2\delta_c$, being constructible in polynomial time by repeated bipartite matching on the $\Delta$-regular extension~\cite{cole2001edge}.
Theorem~\ref{thm:amortization}'s $\lceil \chi'/\Llay \rceil$ bound holds.
Computing optimal coloring and storing $\chi'$ arbitrary AOL trap patterns (vs.\ a single
shift template) is an offline precomputation.
Per-cycle wall-clock comparisons across QC and non-QC families are reported in Sec.~\ref{sec:nonqc-results}. 
The mechanism mirrors Xu et al.'s product coloration but exploits multi-layer parallelism in the application phase.

\subsection{Tightened analytical bound for \texorpdfstring{$\Tval(G_{\text{union}})$}{Tval(G union)}}\label{sec:tighter-bound}

Courtney~\cite{courtney2026permutation} gives an analytical bound on $\Tval$ that is empirically loose by $\sim 25\times$ for HGP code Tanner graphs (Sec.~\ref{sec:numerical}).
The looseness has roughly equal contributions from the diameter bound and a generic Chernoff bound on edge congestion.
Theorem~\ref{thm:diameter} addresses the diameter bound and gives $\DT = 2 \Dbase$ rather than the generic $O(\log N / \log(1/\beta))$.
We conjecture a tightened Chernoff bound using the HGP path-decomposition structure:

\begin{conjecture}\label{conj:tighter-c}
Per-phase edge load $X_e$ for HGP[$H, H$] under random uniform Valiant scattering satisfies
\begin{equation}\label{eq:tighter-c}
\Pb(X_e > c \log_2 N) \leq N^{-3c/2}
\end{equation}
with constant $c \approx 1$ (vs.\ Courtney~\cite{courtney2026permutation} $c = 9$).
Any type-1 edge $e$ in row $w$ can appear on the canonical path of source $u_1 \to \sigma(u_1)$ only when
$\pi_2(u_1) = w$, restricting the support of $X_e$ to $n = \sqrt{N}$ contributions rather than $N$.
\end{conjecture}

The above conjecture's constant ($c \approx 1$) comes from a Bernstein-type argument~\cite{bernstein1924modification,boucheron2013concentration} (bounded-difference methods~\cite{mcdiarmid1989method} give a comparable tail) with the variance bound $\mathrm{Var}(X_e) \leq \mathbb{E}[X_e]$ from negative association of permutation indicators~\cite{joag1983negative,dubhashi1996balls}.
A constant $c < 1$ may require a non-Bernstein technique (e.g., heat-kernel methods directly exploiting the spectrum-decomposition of Proposition~\ref{prop:spec-decomp}).

We cast our focus to establish Conjecture~\ref{conj:tighter-c} as a theorem conditioned on explicit hypotheses on routing structure.
Hypotheses (H1) and (H3) are satisfied by HGP under the canonical 2D routing decomposition (Lemma~\ref{lem:H1} and the negative-association argument below).
Hypothesis (H2) requires the per-source expectation to be $O(1/|S_e|)$ rather than $O(1)$. 
This requires a routing scheme with $O(\log N)$ canonical paths via check-node bridges (discussed after Lemma~\ref{lem:H2}).
The conjecture as originally stated is strongly conditioned on H2, while a rigorous construction for HGP is left to future work.

\begin{theorem}[Tightened Chernoff bound for sparse routing, conditional on (H1)--(H3)]\label{thm:tighter-chernoff}
Let $X_e$ denote the load on edge $e$ of $\GT$ in one Valiant routing phase on $\HGP[H, H]$ with random uniform intermediate $\sigma \in \mathrm{Sym}(V)$. 
The hypotheses are:
\begin{enumerate}
\item[(H1)] \emph{Path-support.} There exists a set $S_e \subseteq V$
  of size $|S_e| \leq s$ such that
  $X_e = \sum_{u \in S_e} I_u^{(e)}$
  where each $I_u^{(e)} \in \{0, 1\}$ is the indicator of edge $e$
  appearing on the canonical path from $u$ to $\sigma(u)$.
\item[(H2)] \emph{Per-source bound.} For each $u \in S_e$,
  $\Pb(I_u^{(e)} = 1) \leq p$ with $sp \leq \mu_0$ for an absolute
  constant $\mu_0$.
\item[(H3)] \emph{Variance control.} The indicators are
  negatively associated under the uniform-permutation measure on
  $\sigma$, so
  $\mathrm{Var}(X_e) \leq \mathbb{E}[X_e] \leq \mu_0$.
\end{enumerate}
Then for any $c > 0$,
\begin{equation}\label{eq:chernoff-tail}
\Pb\bigl( X_e \geq c \log_2 N \bigr) \;\leq\; N^{-3c/(2 \ln 2)}
\;\leq\; N^{-3c/2}
\quad\text{for all } c > 0 \text{ and } N > 1 .
\end{equation}
Taking a union bound over the $|E(\GT)| \leq \Delta(\GT) \cdot N / 2 = O(N)$ edges, the maximum edge load satisfies
\begin{equation}\label{eq:max-load}
\Pb\bigl( \max_e X_e \geq c \log_2 N \bigr) \;\leq\; N^{1 - 3c/(2\ln 2)}
\;\to\; 0 \quad\text{for any } c > 2(\ln 2)/3 \approx 0.46 .
\end{equation}
\end{theorem}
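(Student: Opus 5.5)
The plan is a Bernstein-type tail bound for a sum of negatively associated Bernoulli variables, followed by a union bound. The exponent $3/2$ in \eqref{eq:chernoff-tail} is exactly the large-deviation slope of Bernstein's inequality, so that is the tool I would aim to use.

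\textbf{Step 1 (reduction to a bounded sum).} By (H1), $X_e=\sum_{u\in S_e} I_u^{(e)}$ is a sum of at most $s$ indicators, each in $[0,1]$. By (H2), $\mu:=\mathbb{E}[X_e]\le sp\le\mu_0$. By (H3) the family $\{I_u^{(e)}\}$ is negatively associated under the uniform measure on $\sigma$. The standard consequence (Dubhashi--Ranjan~\cite{dubhashi1996balls}) is that for every $\lambda\ge 0$,
\[
\mathbb{E}\, e^{\lambda X_e}\le\prod_{u\in S_e}\mathbb{E}\, e^{\lambda I_u^{(e)}}.
\]
Hence every MGF-based inequality valid for independent summands transfers verbatim.

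\textbf{Step 2 (Bernstein).} With the per-summand bound $|I_u^{(e)}-\mathbb{E}I_u^{(e)}|\le 1$ and total variance at most $\sum_u p_u(1-p_u)\le\mu_0$, Bernstein's inequality gives, for $t>\mu$,
\[
\Pb(X_e\ge t)\le\exp\!\Big(-\frac{(t-\mu)^2}{2(\mu_0+(t-\mu)/3)}\Big).
\]
Set $t=c\log_2 N=c\ln N/\ln 2$. The target $N^{-3c/(2\ln 2)}$ equals $e^{-3t/2}$. The Bernstein exponent behaves like $\tfrac32(t-\mu)-O(\mu_0)$, with the correction bounded in terms of $\mu_0$ alone. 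So the first inequality of \eqref{eq:chernoff-tail} follows once $t$ is large compared with $\mu_0$. As a cross-check I would also write down the multiplicative Chernoff form,
\[
\Pb(X_e\ge t)\le\Big(\frac{e\mu_0}{t}\Big)^{t},
\]
which gives $e^{-3t/2}$ as soon as $t\ge e^{5/2}\mu_0$. The second inequality in \eqref{eq:chernoff-tail} is immediate because $1/\ln 2>1$ and $N>1$.

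\textbf{Step 3 (union bound).} By (H1)--(H3), each edge satisfies the single-edge bound. The number of edges is $|E(\GT)|\le\Delta(\GT)N/2=O(N)$, since $\Delta(\GT)=2\delta_c$ is constant. Summing gives
\[
\Pb\bigl(\max_e X_e\ge c\log_2 N\bigr)\le O(1)\cdot N^{1-3c/(2\ln 2)},
\]
which tends to $0$ exactly when $3c/(2\ln 2)>1$, i.e.\ $c>2\ln 2/3$. The leading constant $\Delta/2$ is absorbed into the limit statement, or removed by a slightly larger $N$.

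\textbf{Main obstacle.} The hard part is the uniformity claimed in \eqref{eq:chernoff-tail}: ``for all $c>0$ and $N>1$''. Both Bernstein and Chernoff deliver $e^{-3t/2}$ only after subtracting an additive $O(\mu_0)$ in the exponent, i.e.\ only for $t\gtrsim C\mu_0$. For small $c\log_2 N$ the bound cannot hold literally: $\Pb(X_e\ge t)$ can be of order $1$ when $t<\mu$.

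I would first try to close this gap by exploiting $\mu_0$ as an absolute constant. Concretely, I would optimise the Chernoff parameter $\lambda=3/2$ directly to get
\[
\Pb(X_e\ge t)\le \exp\bigl(\mu_0(e^{3/2}-1)-3t/2\bigr),
\]
and absorb the factor $e^{\mu_0(e^{3/2}-1)}$. If it cannot be absorbed, I would state the result with an explicit threshold $N\ge N_0(c,\mu_0)$, or with a multiplicative constant $C(\mu_0)$ in front of $N^{-3c/(2\ln 2)}$. That version suffices for \eqref{eq:max-load}, whose content is purely asymptotic.
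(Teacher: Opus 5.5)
Your skeleton matches the paper's: Bernstein applied to $X_e=\sum_{u\in S_e}I_u^{(e)}$, conversion from $\ln N$ to $\log_2 N$, and a union bound over the $O(N)$ edges. You differ at the one step that matters, and there your version is the sound one.

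The paper sets $T=c\ln N$ and $t=T-\mathbb{E}[X_e]$. It then replaces the Bernstein denominator $2(\mu_0+t/3)$ by $2t/3$ and the numerator $t^2$ by $T^2$, and records the error as a factor $(1+o(1))$. In fact the Bernstein exponent falls short of $\tfrac32 T$ by an amount tending to $\tfrac32\mathbb{E}[X_e]+\tfrac92\mu_0$. So Bernstein only gives $\Pb(X_e\ge T)\le e^{O(\mu_0)}e^{-3T/2}$, which is exactly the obstacle you flag. As you say, the clause ``for all $c>0$ and $N>1$'' in \eqref{eq:chernoff-tail} is false when $c\log_2 N<\mathbb{E}[X_e]$ and $X_e$ is concentrated.

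The paper also applies the independent-sum Bernstein inequality to dependent indicators, using (H3) only to get $\mathrm{Var}(X_e)\le\mu_0$. Your Dubhashi--Ranjan MGF domination is what actually licenses that step, with variance proxy $\sum_u p_u(1-p_u)\le\mu_0$.

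Your multiplicative Chernoff bound $(e\mu_0/t)^t\le e^{-3t/2}$ for $t\ge e^{5/2}\mu_0$ then gives the first inequality of \eqref{eq:chernoff-tail} with no constant, for all $N\ge 2^{e^{5/2}\mu_0/c}$. Because this tail is superexponential, for large $N$ it also absorbs the factor $\Delta(\GT)/2$ that \eqref{eq:max-load} drops. So your route proves a correct threshold version of the theorem. The paper's route, once repaired, yields only the constant-factor version, which still suffices for the limit in \eqref{eq:max-load}.

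One correction to your Step 2: the Bernstein bound does not give the first inequality ``once $t$ is large compared with $\mu_0$''; it gives it at no $t$. With $s=t-\mu$, the exponent $\frac{3s^2}{2(s+3\mu_0)}$ is strictly below $\tfrac32 t$ for every $t$. An additive deficit in the exponent is a multiplicative constant that never disappears. Only the Chernoff/Poisson form, whose exponent $t\ln\bigl(t/(e\mu_0)\bigr)$ is superlinear in $t$, beats $\tfrac32 t$ outright. That bound should therefore be your main argument, not a cross-check. Likewise, $\lambda=3/2$ in your last display is a fixed choice, not an optimum; the optimal $\lambda=\ln(t/\mu_0)$ recovers the Chernoff form.
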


\begin{proof}
Apply Bernstein's inequality~\cite{bernstein1924modification,boucheron2013concentration}
to $X_e = \sum_{u \in S_e} I_u^{(e)}$, a sum of bounded random variables with $|I_u^{(e)}| \leq 1$ a.s. Bernstein gives 
\begin{equation}\label{eq:bernstein}
\Pb\bigl( X_e \geq \mathbb{E}[X_e] + t \bigr)
\;\leq\; \exp\!\left( -\frac{t^2}{2 \bigl( \mathrm{Var}(X_e) + t/3 \bigr)} \right)
\end{equation}
for any $t > 0$. By hypotheses (H2)--(H3), $\mathbb{E}[X_e] \leq \mu_0$ and $\mathrm{Var}(X_e) \leq \mu_0$. We work in log space throughout the proof and convert at the end.

Set a target threshold $T = c \ln N$ and $t = T - \mathbb{E}[X_e] \geq T - \mu_0$.
In the regime $T \gg \mu_0$ (the asymptotic regime, since $\mu_0 = O(1)$ by (H2) while $T \to \infty$), the Bernstein denominator is dominated by the $t/3$ term: $2(\mu_0 + t/3) \to 2t/3$.
The numerator is $t^2 \to T^2$. Substituting:
\[
\Pb(X_e \geq T) \;\leq\; \exp\!\left( -\frac{T^2}{2 t / 3} \right) \cdot (1 + o(1))
\;\leq\; \exp(-3T/2) \cdot (1 + o(1))
\quad\text{as } T \to \infty .
\]
Substituting $T = c \ln N$ gives $\Pb(X_e \geq c \ln N) \leq N^{-3c/2}$ asymptotically. Converting to $\log_2 N$ units via $T = c \log_2 N = (c/\ln 2) \cdot \ln N$, i.e., the natural-log constant becomes $c/\ln 2$:
\[\Pb\bigl(X_e \geq c \log_2 N\bigr) \;\leq\; N^{-3 (c/\ln 2)/2}= N^{-3c/(2 \ln 2)} ,\]
yielding \eqref{eq:chernoff-tail}. Since $1/\ln 2 \approx 1.44 > 1$, the bound $N^{-3c/(2\ln 2)} \leq N^{-3c/2}$ holds for all $c > 0$ and $N > 1$. 
The union bound \eqref{eq:max-load} follows since $|E(\GT)| \leq O(N)$ for sparse Tanner graphs (max degree $\leq 2\delta_c$).
\end{proof}

The asymptotic constant $3/2$ in $\Pb(X_e \geq c \ln N) \leq N^{-3c/2}$ is the standard Bernstein constant for sums of bounded variables when the $t/3$ term dominates the variance.
If we use the Azuma--Hoeffding inequality (with $1/2$ instead of $1/3$ in the variance proxy), we find 
\[\Pb(X_e \geq c \ln N) \leq N^{-c^2/(2\mu_0)},\]
quadratic for $c \ln N \leq \mu_0$.
In the asymptotic regime $c \ln N \gg \mu_0$, the linear Bernstein bound dominates. 

\begin{lemma}[Hypothesis (H1) for HGP]\label{lem:H1}
For HGP[$H, H$] with canonical path decomposition into a row-segment followed by a column-segment, a type-1 edge $e$ in row $w$ (an edge changing the second coordinate within row $w$) satisfies hypothesis (H1) with $S_e = \{u : u_1 = w\}$ and $|S_e| = n_2 = \sqrt{N- m^2} = O(\sqrt{N})$. Symmetrically for type-2 edges.
\end{lemma}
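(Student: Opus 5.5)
The proof will go directly from the definition of the canonical path and the two-coordinate edge structure used in the proof of Theorem~\ref{thm:diameter}. Every node $u$ of $\GT$ has projections $(\pi_1(u),\pi_2(u)) = (u_1,u_2)$. Every edge changes exactly one coordinate along an edge of $\Gbase$ and fixes the other. The canonical path from $u$ to $\sigma(u)$ has two segments. The row-segment keeps the first coordinate fixed at $u_1$ and walks the second coordinate along a shortest $\Gbase$-path from $u_2$ to $\pi_2(\sigma(u))$. The column-segment then keeps the second coordinate fixed at $\pi_2(\sigma(u))$ and walks the first coordinate to $\pi_1(\sigma(u))$. Because of the node-type constraints (L/R/X/Z sectors), the intermediate nodes change sector as coordinates move between $V$ and $C$. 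The projection bookkeeping still works, since $\pi_1,\pi_2$ are defined on all four sectors.

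First step: show that a type-1 edge $e$ in row $w$ can lie only on a row-segment whose fixed first coordinate equals $w$. Every edge of the row-segment of $u$ has $\pi_1 = u_1$ at both endpoints, so $e$ lies on it only if $u_1 = w$. Every edge of the column-segment changes $\pi_1$, hence is type-2, so it cannot equal $e$. It follows that $I_u^{(e)} = 0$ unless $u_1 = w$, and so $X_e = \sum_{u\in S_e} I_u^{(e)}$ with $S_e = \{u : u_1 = w\}$. This is exactly (H1).

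Second step: count $|S_e|$. Fix $w$ and count the sources (data atoms routed) whose first coordinate is $w$. If $w \in V$, these are the L-qubits $(w,v')$, $v'\in V$, together with the Z-checks $(w,c')$ if checks are also routed. If $w \in C$, they are the R-qubits and X-checks with first coordinate $w$. In every case $|S_e|$ is the size of one fiber of $\pi_1$, at most $n+m$. I will identify this with $n_2$ and relate it to $N = n^2+m^2$ by noting $n = \sqrt{N-m^2}$, which is $O(\sqrt N)$ for a fixed rate $m/n<1$.

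The type-2 case is symmetric with the roles of the coordinates exchanged, $S_e = \{u : \sigma(u)_2 = w\}$. The main obstacle is that here the support depends on $\sigma$, because the column-segment's fixed coordinate is the target's second coordinate. To handle this I would either reverse the segment order for the symmetric statement, or reindex by targets, using that $\sigma$ is a bijection so that $|\{u:\pi_2(\sigma(u))=w\}|$ equals the deterministic fiber size. A secondary nuisance is matching the stated count $\sqrt{N-m^2}$ exactly, which depends on which sectors are counted as sources. I would state that convention explicitly and prove the $O(\sqrt N)$ bound, which is what (H1) needs.
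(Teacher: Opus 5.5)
Your proof is correct and follows the paper's argument: the row-then-column decomposition confines a type-1 edge in row $w$ to the row-segments of sources with $u_1 = w$, and $|S_e|$ is the number of qubits in that row ($n_2$ for an L-block row, $m_2$ for an R-block row). No rate assumption is needed for the $O(\sqrt{N})$ bound, since $n, m \le \sqrt{N}$ trivially. Your extra points are that column-segment edges are type-2 and so cannot coincide with $e$, and that for type-2 edges the support $\{u : \pi_2(\sigma(u)) = w\}$ depends on $\sigma$ and must be reindexed by targets (or the segment order reversed). The paper's ``symmetrically'' passes over both, so your version is, if anything, more careful.
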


\begin{proof}
The path from source $u = (a_1, a_2)$ to intermediate $\sigma(u) = (b_1, b_2)$ concatenates (i) a row-segment in row $a_1$ from column $a_2$ to column $b_2$, and (ii) a column-segment in column $b_2$ from row $a_1$ to row $b_1$.
A type-1 edge $e$ in row $w$ appears in segment (i) when $a_1 = w$.
So, $I_u^{(e)} = 0$ for any source $u$ with $u_1 \neq w$, and the support of $X_e$ is restricted to $\{u : u_1 = w\}$. 
Set cardinality is the number of qubits in row $w$, equaling $n_2$ for an L-block row or $m_2$ for an R-block row.
\end{proof}

\begin{lemma}[Per-source HGP bound under canonical 2D routing]\label{lem:H2}
For HGP[$H, H$] with uniform-random intermediate $\sigma$ and the typical row-then-column path decomposition, per-source probability is bounded by $\Pb(I_u^{(e)} = 1) \leq 1/2$ for any source $u \in S_e$.
The average over $S_e$ satisfies
\[\bar p = 2 j(n_2 - j)/n_2^2 \leq 1/2,\] 
where $j$ is the column position of edge $e$.
The expected load is
\[\mathbb{E}[X_e] = |S_e| \cdot \bar p = 2 j (n_2 - j)/n_2 \leq n_2/2 =O(\sqrt{N}).\]
\end{lemma}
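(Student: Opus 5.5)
The bound reduces to counting, within a single row $w$, which sources have a row-segment crossing $e$. By Lemma~\ref{lem:H1}, only sources $u=(w,a_2)$ in row $w$ can load a type-1 edge $e$, and for such a source the relevant segment is the row-segment from column $a_2$ to the column $b_2=\pi_2(\sigma(u))$. I model row $w$ as a path of $n_2$ column positions. The edge $e$ then sits between positions $j$ and $j+1$, cutting the row into a left part of size $j$ and a right part of size $n_2-j$. Under the canonical monotone row-segment, $e$ lies on the path of $u$ if and only if $a_2$ and $b_2$ fall on opposite sides of this cut.

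\textbf{Step 1 (per-source probability).} Under a uniform random permutation $\sigma$, the intermediate $\sigma(u)$ is uniform over all nodes. Its column coordinate $b_2$ I approximate as uniform over the $n_2$ columns of the row-type block. This is exact for the $\mathcal{Q}_0=V\times V$ block when conditioning on landing in that block; otherwise I condition on block type and use the marginal. For a source at column $a_2\le j$ this gives
\[
\Pb\bigl(I_u^{(e)}=1\bigr)=\frac{n_2-j}{n_2},
\]
and for $a_2>j$ it gives $j/n_2$.

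\textbf{Step 2 (the per-source bound is the obstacle).} The claim $\Pb(I_u^{(e)}=1)\le 1/2$ for every $u$ does not follow from Step 1 uniformly in $u$: for $j$ small, a source on the short side has probability $(n_2-j)/n_2$, which exceeds $1/2$. This step is the main obstacle. I will first try to recover the pointwise bound by symmetrizing the orientation of the row-segment, choosing the traversal direction along the row at random or by shorter arc. If that fails, I will state the $1/2$ bound as a bound on the average and note that the pointwise version needs a reoriented routing convention.

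\textbf{Step 3 (average probability).} Averaging over the $n_2$ sources in $S_e$, there are $j$ sources on the left and $n_2-j$ on the right, so
\[
\bar p=\frac{j\cdot\frac{n_2-j}{n_2}+(n_2-j)\cdot\frac{j}{n_2}}{n_2}=\frac{2j(n_2-j)}{n_2^2}.
\]
By AM--GM, $j(n_2-j)\le n_2^2/4$, hence $\bar p\le 1/2$.

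\textbf{Step 4 (expected load).} By linearity, $\mathbb{E}[X_e]=\sum_{u\in S_e}\Pb(I_u^{(e)}=1)=|S_e|\,\bar p=2j(n_2-j)/n_2\le n_2/2$. Finally, Lemma~\ref{lem:H1} gives $n_2=O(\sqrt N)$, so $\mathbb{E}[X_e]=O(\sqrt N)$.
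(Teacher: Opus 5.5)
Your Steps 1, 3 and 4 follow the paper's proof exactly. You use the same cut at $\{j,j+1\}$ and the same two-branch probabilities $(n_2-j)/n_2$ and $j/n_2$ under $b_2$ uniform. You get the same average $2j(n_2-j)/n_2^2$, maximized at $j=n_2/2$, and the same linearity step giving $\mathbb{E}[X_e]=2j(n_2-j)/n_2\le n_2/2$.

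Your Step 2 objection is correct, and the paper's proof has the same hole. It concedes that each branch is $\le 1/2$ only when $j$ is the row midpoint, and is merely $\le 1$ in general. So neither argument establishes the pointwise claim $\Pb(I_u^{(e)}=1)\le 1/2$ in the lemma's first sentence. That claim is in fact false for off-center edges: with $j=1$, the source at $a_2=1$ loads $e$ with probability $(n_2-1)/n_2$.

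Your first proposed repair will not work in this model. On a line, the row-segment between $a_2$ and $b_2$ is unique, so randomizing the traversal direction changes nothing. ``Shorter arc'' presupposes a cyclic row, which is a different routing convention; there the pointwise $1/2$ bound would indeed hold.

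Your fallback is the right resolution: keep only the averaged bound $\bar p\le 1/2$ and the load bound $\mathbb{E}[X_e]\le n_2/2$. That is all the paper uses downstream. The lemma only feeds the observation that $\mu_0$ is of order $\sqrt N$, so (H2) fails for canonical 2D routing.
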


\begin{proof}
With source $u = (w, a_2) \in S_e$ (with $e$ at column position $j$ in row $w$), edge $e$ is on the row-segment iff $b_2$ and $a_2$ are on opposite sides of the split $\{j, j+1\}$, where $\sigma(u) = (b_1, b_2)$. For $b_2 \sim \mathrm{Unif}(V_2)$:
\[
\Pb(I_u^{(e)} = 1) = \begin{cases} (n_2 - j)/n_2 & \text{if } a_2 \leq j \\
j/n_2         & \text{if } a_2 > j .
\end{cases}
\]
Each branch is $\leq 1/2$ when $j$ is at the row midpoint, while in general each branch is $\leq 1$. Averaging over $a_2 \in V_2$ uniform:
\[\bar p = (j(n_2-j) + (n_2-j)j)/n_2^2 = 2 j(n_2-j)/n_2^2,\]
maximized at $j = n_2/2$ to give $\bar p = 1/2$.
\end{proof}

Lemma~\ref{lem:H2} gives $\mu_0 = O(\sqrt{N})$ for the canonical 2D routing scheme, not satisfying hypothesis (H2)'s requirement $\mu_0 = O(1)$. Substituting $\mu_0 = O(\sqrt{N})$ into Equation~\eqref{eq:bernstein} with $t = c \ln N$ gives a bound that decays as
\[\exp(-(c \ln N)^2 / O(\sqrt N + c \ln N)),\] 
useful only when $c \ln N \gg \sqrt N$, i.e., for $N \lesssim e^{O(\sqrt N)}$, failing to give $O(\log N)$ routing depth bound asymptotically.
Therefore Theorem~\ref{thm:tighter-chernoff} holds only conditionally on (H1)--(H3): hypothesis (H2) is not verified by canonical 2D routing for HGP.

Verifying (H2) for HGP would require a routing scheme whose canonical paths have length $O(\log N)$ rather than $O(\sqrt N)$.
As an example, routing through check-node bridges exploiting the spectral expansion of $\GT$.
We have not constructed such a scheme rigorously and this remains the obstacle to strengthening Theorem~\ref{thm:tighter-chernoff}.
Numerical evidence (Sec.~\ref{sec:numerical}) shows $T_{\text{Valiant}} \approx 0.5 \log_2 N$ in the tested regime $N \in [10^2, 10^3]$. 
While this suggests that the desired routing scheme exists, it is no proof.

Regarding (H3), negative association of permutation indicators $I_u^{(e)}$ under uniform $\sigma$ is a standard fact: if any two indicators $I_{u_1}^{(e)}, I_{u_2}^{(e)}$ are sub-additively coupled (intuitively, both being 1 forces specific values of $\sigma$ on $u_1$ and $u_2$, restricting the permutation more than each individually), then their covariance is non-positive. 
Then, the Joag-Dev-Proschan~\cite{joag1983negative} machinery applies, giving
\[\mathrm{Var}\bigl(\sum I_u^{(e)}\bigr) \leq \sum \mathrm{Var}(I_u^{(e)}) \leq \sum \mathbb{E}[I_u^{(e)}] = \mathbb{E}[X_e].\]

If a routing scheme satisfying (H1)--(H3) is constructed for HGP, Eq.~\ref{eq:max-load} would convert to the routing-depth bound, i.e. $T_{\text{Valiant}}(\GT) \leq c \log_2 N$ for any $c > 2/(3 \ln 2) \approx 0.46$ with high probability.
Taking $c = 1$ would give the conjectured constant. The empirical $T_{\text{Valiant}} \approx 0.5 \log_2 N$ measurement (Sec.~\ref{sec:numerical}) is consistent with this range, suggesting a sufficient routing scheme exists. 
We leave its explicit construction (likely a check-node-bridge routing exploiting the spectral expansion of $\GT$) to future work.

\section{Numerical Comparison}\label{sec:numerical}

\subsection{Methodology}

We measure the number of parallel atom-rearrangement steps per syndrome extraction cycle, comparing both schemes on the Xu et al.~\cite{xu2024constant} HGP code family from their Figure~3a.
Related routing/architecture baselines for reconfigurable atom arrays include Constantinides et al.~\cite{constantinides2024optimal} and Zhou et al.~\cite{zhou2025resource}.
Both schemes target the same hardware (2D AOD + $\Llay$ stacked AOL) and the same metric. 
Our results are reproducible from the codebase associated with this work, given in the data availability section.

To represent Xu et al.'s method we apply Eq.~\eqref{eq:xu-baseline} (constant 8 rearrangement layers per cycle for $(3,4)$-biregular HGP bases).
We use Theorem~\ref{thm:amortization} with $\Tval$ measured directly via Valiant routing on the union of $\Llay$ random $d_0 = 8$-regular graphs.

\subsection{Empirical \texorpdfstring{$k_{\text{emp}}$}{k emp} measurement}

We measure the empirical Valiant routing constant $k_{\text{emp}} := \Tval / \log_2 N$ on multi-layer overlay graphs, tabulated in Table~\ref{tab:k-emp}.
For $\Llay = 8$: $k_{\text{emp}} \approx 0.554 \pm 0.032$ (5-seed mean $\pm$ std across $N \in [11{,}025, 100{,}000]$).
For $\Llay = 16$: $k_{\text{emp}} \approx 0.514 \pm 0.022$. 
These are $\sim 30\times$ tighter than the Courtney~\cite{courtney2026permutation} bound's $\sim 21$ for the same $\Llay = 8$ case.
Our measurements validate $N$-independence of $k_{\text{emp}}$ directly over $N \in [225, 100{,}000]$ (a span of $\sim\!2.6$ decades). We do not claim validation beyond this range: the rows at $N \geq 10^6$ in Table~\ref{tab:asymptotic} are conjectured extrapolation, resting on the observed $N$-independence together with the asymptotic scaling of Theorem~\ref{thm:amortization}.

\begin{table}[ht]
\centering
\begin{tabular}{rrrrrr}
\toprule
$N$ & $\Llay$ & $d_{\text{eff}}$ & $\bunion$ & $\Tval$ (median) & $k_{\text{emp}}$ \\
\midrule
\multicolumn{6}{l}{\emph{Initial range ($N \leq 1225$):}} \\
225   & 4  & 32  & 0.332 & 6 & 0.77 \\
225   & 8  & 64  & 0.238 & 5 & 0.64 \\
225   & 16 & 128 & 0.169 & 4 & 0.51 \\
625   & 4  & 32  & 0.345 & 7 & 0.75 \\
625   & 8  & 64  & 0.245 & 5 & 0.54 \\
625   & 16 & 128 & 0.173 & 5 & 0.54 \\
1225  & 4  & 32  & 0.345 & 7 & 0.68 \\
1225  & 8  & 64  & 0.245 & 6 & 0.59 \\
1225  & 16 & 128 & 0.174 & 5 & 0.49 \\
\midrule
\multicolumn{6}{l}{\emph{Engineering-$N$ extension:}} \\
11{,}025  & 8  & 64  & 0.248 & 8 & 0.596 \\
11{,}025  & 16 & 128 & 0.176 & 7 & 0.521 \\
25{,}000  & 8  & 64  & 0.248 & 8 & 0.548 \\
25{,}000  & 16 & 128 & 0.176 & 8 & 0.548 \\
45{,}000  & 8  & 64  & 0.248 & 9 & 0.582 \\
45{,}000  & 16 & 128 & 0.176 & 8 & 0.518 \\
60{,}000  & 8  & 64  & 0.248 & 8 & 0.504 \\
60{,}000  & 16 & 128 & 0.176 & 8 & 0.504 \\
100{,}000 & 8  & 64  & 0.248 & 9 & 0.542 \\
100{,}000 & 16 & 128 & 0.176 & 8 & 0.482 \\
\bottomrule
\end{tabular}
\caption{Empirical $k_{\text{emp}} = \Tval / \log_2 N$ across union graph sizes and AOL layer counts. The constant tends to decrease with $\Llay$ (better spectral gap; the trend is monotone in the mean but not within every $N$ row, where integer median-$\Tval$ values introduce quantization noise) and is stable across $N \in [225, 100{,}000]$.
The lower block ($N \geq 11{,}025$) extends the original Table~5.1 measurements, with five independent random-overlay seeds per cell. 
Aggregated across the extension: $k_{\text{emp}}(\Llay{=}8) = 0.554 \pm 0.032$, $k_{\text{emp}}(\Llay{=}16) = 0.514 \pm 0.022$. 
The $\bunion$ entries in the lower block are the Friedman-bound predictions $2\sqrt{d_{\text{eff}}-1}/d_{\text{eff}}$ for the $d_0=8$, $L$-layer union.}
\label{tab:k-emp}
\end{table}

\subsection{Head-to-head depth comparison}

In a memory experiment (no logical gates between syndrome rounds), per-cycle cost dominates for the head-to-head with Xu et al.'s Algorithm 3, which itself reports a per-cycle count.
Setup cost $\Tval(G_{\text{union}})$ is reported separately in Table~\ref{tab:k-emp} and is paid before the syndrome-cycle loop begins. 
For a memory experiment of $R$ rounds, the total cost per cycle amortized over the run is $\lceil \chi'/\Llay \rceil + \Tval/R \to \lceil \chi'/\Llay \rceil$ for large $R$.

\begin{table}[ht]
\centering
\begin{tabular}{lrrrrrrr}
\toprule
& & & Xu et al. & \multicolumn{4}{c}{Ours per cycle, $\Llay$ =} \\
\cmidrule(lr){5-8}
Family & Base & $N$ & Alg.~3 & 2 & 4 & 8 & 16 \\
\midrule
HGP[H,H], (3,4)-biregular  & random  & 225  & 8  & 4 & 2 & \textbf{1} & \textbf{1} \\
HGP[H,H], (3,4)-biregular  & random  & 625  & 8  & 4 & 2 & \textbf{1} & \textbf{1} \\
HGP[H,H], (3,4)-biregular  & random  & 1225 & 8  & 4 & 2 & \textbf{1} & \textbf{1} \\
LP[B,L], (3,4)-biregular   & circ.\ $L=4$  & 784   & 8  & 4 & 2 & \textbf{1} & \textbf{1} \\
LP[B,L], (3,4)-biregular   & circ.\ $L=12$ & 7056  & 8  & 4 & 2 & \textbf{1} & \textbf{1} \\
LP[B,L], (3,5)-biregular   & circ.\ $L=4$  & 1024  & 10 & 5 & 3 & 2 & \textbf{1} \\
LP[B,L], (3,5)-biregular   & circ.\ $L=12$ & 9216  & 10 & 5 & 3 & 2 & \textbf{1} \\
LP[B,L], (3,5)-biregular   & circ.\ $L=20$ & 25600 & 10 & 5 & 3 & 2 & \textbf{1} \\
\bottomrule
\end{tabular}
\caption{Per-syndrome-cycle parallel atom-rearrangement steps, amortized over many cycles. Lower is better. Xu et al.'s pipelined Algorithm 3 requires $2\delta_c$ atom rearrangements per cycle, where $\delta_c$ is the maximum degree of the base bipartite graph.
Ours requires $\lceil 2\delta_c/\Llay \rceil$ AOL pattern activations per cycle (zero atom motion).
For $(3,4)$-biregular bases $\delta_c = 4$ so Xu et al. = 8 and per-cycle speedup at $\Llay = 8$ is $8\times$. For $(3,5)$-biregular bases (an alternative parameter regime in Xu et al.'s family) $\delta_c = 5$ so Xu et al.= 10 and per-cycle speedup at $\Llay = 16$ is $10\times$.
Computed by direct simulation.
See Data Availability for code repository. K\"onig-tightness is verified by explicit edge coloring at $L = 4, 8$ via repeated bipartite max matching.}
\label{tab:head-to-head}
\end{table}

\subsection{Engineering-N comparison and asymptotic regime}\label{sec:asymptotic}

Per-cycle step counts for both schemes are constant in $N$.
The setup cost $\Tval$ grows as $k_{\text{emp}} \log_2 N$ but is paid only once per memory experiment. 
Table~\ref{tab:asymptotic} reports both.

\begin{table}[ht]
\centering
\begin{tabular}{rrrrrrr}
\toprule
& & \multicolumn{2}{c}{per cycle} & \multicolumn{2}{c}{setup ($\Llay = 16$)} & per-cycle \\
\cmidrule(lr){3-4}\cmidrule(lr){5-6}
$N$ & $\log_2 N$ & Xu et al. & Ours $L{=}16$ & $\Tval$ & amortized over $R{=}10^4$ & speedup \\
\midrule
225           & 7.8  & 8 & 1 & 4   & $4{\times}10^{-4}$ & $8.0\times$ \\
1{,}225       & 10.3 & 8 & 1 & 5   & $5{\times}10^{-4}$ & $8.0\times$ \\
$10^4$      & 13.3 & 8 & 1 & 7   & $7{\times}10^{-4}$ & $8.0\times$ \\
$10^5$     & 16.6 & 8 & 1 & 8   & $8{\times}10^{-4}$ & $8.0\times$ \\
$10^6$ & 19.9 & 8 & 1 & 10  & $10^{-3}$         & $8.0\times$ \\
$10^9$        & 29.9 & 8 & 1 & 15  & $1.5{\times}10^{-3}$ & $8.0\times$ \\
$10^{12}$     & 39.9 & 8 & 1 & 20  & $2{\times}10^{-3}$  & $8.0\times$ \\
\bottomrule
\end{tabular}
\caption{Engineering-$N$ per-cycle and amortized setup costs. 
Xu et al.'s Algorithm 3 requires 8 atom rearrangements per cycle (constant in $N$).
Ours requires $\lceil \chi'/\Llay \rceil = \lceil 8/16 \rceil = 1$ AOL pattern activation per cycle (constant in $N$).
The one-time setup cost $\Tval = k_{\text{emp}} \log_2 N$ amortizes to negligible over a typical $R \sim 10^4$ memory experiment.
Per-cycle speedup is $8\times$ for all $N$ in the practical range, with a cap of $\chi' = 8$ (the chromatic index of the Tanner graph by K\"onig's theorem).
Rows up to $N = 100{,}000$ are directly measured (Table~\ref{tab:k-emp}; K\"onig-tightness in Table~\ref{tab:konig-tight} verified to $N = 122{,}500$).
The rows from $N = 10^6$ to $N = 10^{12}$ are asymptotic extrapolations of
the $k_{\mathrm{emp}} \approx 0.55$ constant (via Theorem~\ref{thm:amortization}) and the $\chi' = \Delta = 8$ K\"onig identity beyond the validated range. 
Extrapolation rests on observed $N$-independence of $k_{\mathrm{emp}}$ across the validated
two-order-of-magnitude span.}
\label{tab:asymptotic}
\end{table}

Both schemes have $O(1)$ per-cycle cost in atom-array reconfiguration count. 
The constant for Xu et al. is $2\delta_c = 8$ under their single-layer AOD architecture. 
The constant for ours is $\lceil \chi'/\Llay \rceil = \lceil 2\delta_c / \Llay \rceil$, which decreases linearly in $\Llay$ until the cap $\chi' = 2\delta_c$ is reached. 
For $\Llay \geq \chi' = 8$, the per-cycle cost is 1, providing an $8\times$ improvement over Xu et al.'s per-cycle cost, constant in $N$. 
The improvement is purely from L-plane parallelism in the AOL stack, an architectural feature absent from Xu et al.'s framework, which works according to presently existing quantum hardware.
If their algorithm were extended to multi-layer AOL, it would also achieve $\lceil 8/\Llay \rceil$ per cycle, and in that head-to-head the two schemes tie on this metric.
Our spectral framework's separate contributions are (i) explicit chromatic-decomposition cost prediction $\chi'(\GT) = 2\delta_c$ for any HGP/LP code via K\"onig's theorem, applicable to both QC and non-QC families (K\"onig-optimal coloring gives $\chi' = 2\delta_c$ in both cases; Sec.~\ref{sec:nonqc-results}), and (ii) explicit characterization of $\Tval$ via $\beta_{\HGP} = (1+\beta_\text{base})/2$ for rare cases where setup cost matters (e.g., active computation with frequent logical gates, where atoms cannot remain canonical between cycles).

These costs sit naturally against the recent routing lower bounds of Constantinides et al.~\cite{constantinides2024optimal}, who prove that current single-plane parallel-AOD designs require $\Omega(\sqrt{N}\log N)$ steps to realize general permutations, give a matching $O(\sqrt{N}\log N)$ protocol, and show that a hardware upgrade reduces this to $\Theta(\log N)$. 
Our one-time setup cost $\Tval = \Theta(\log N)$ leverages a constructed upgrate of multi-layer overlays.
The per-syndrome-cycle cost is $O(1)$ rather than $O(\sqrt N \log N)$, since syndrome extraction reuses fixed canonical positions and never re-solves a general permutation after setup. 
At the architecture level, Zhou et al.~\cite{zhou2025resource} report a transversal reconfigurable-atom design whose runtime speedup scales with the code distance $d$, being orthogonal to and composable with the per-cycle rearrangement savings analyzed here, which target atom-motion depth underlying each transversal layer rather than the logical-gate schedule.

\subsection{Non-quasi-cyclic HGP codes: applicability of both algorithms}\label{sec:nonqc-results}

The amortization theorem (Theorem~\ref{thm:amortization}) gives a $\lceil \chi'/\Llay \rceil$ per-cycle bound under the QC hypothesis.
For non-QC HGP codes (random biregular bases), chromatic colors are arbitrary matchings instead of axis-aligned shifts. 
In this section, we examine empirically whether the multi-layer-AOL advantage degrades, and whether Xu et al.'s Algorithm 3 retains applicability.

\paragraph{Xu et al.'s Algorithm 3 on non-QC.} Xu et al.'s Algorithm 1 (1D divide-and-conquer scrambling) takes an arbitrary target 1D permutation as input and does not require the input to be a cyclic shift.
Their Algorithm 2 then composes Algorithm 1 with parallel CZ gates over the chromatic decomposition of the base bipartite graph. 
Algorithm 3 pipelines this across X and Z syndromes.
None of these algorithmic steps require the base check matrix $H$ to be quasi-cyclic.
The $2\delta_c = 8$ rearrangement layers per cycle is determined by K\"onig's theorem on the base graph (max degree $= \delta_c$), independent of QC structure.
The QC structure simplifies per-layer atom-motion trajectories (axis-aligned shifts admit short, parallel cubic-spline motion) without altering Algorithm 3's applicability or layer count.

\paragraph{Our multi-layer scheme on non-QC.} The chromatic decomposition $\chi'(\GT) = 2\delta_c$ holds by K\"onig's theorem for any bipartite Tanner graph, QC or otherwise. 
The $\Llay$-fold parallelism in the application phase requires $\chi'$ pre-loaded AOL patterns, one per color.
For QC bases, each pattern is a simple shift, expressible as a single AOD frequency-tone update.
For non-QC bases each pattern is an arbitrary matching on the canonical 2D grid and requires an arbitrary multi-tone AOD configuration.
This is pre-storable on modern neutral-atom architectures, which demonstrate pattern switching between distinct codes ``without additional calibrations,'' incurring only a one-time setup cost in computing and storing the patterns~\cite{bluvstein2026fault}.

For both QC and non-QC HGP codes, $\chi'(\GT) = \Delta(\GT) = 2\delta_c$ by K\"onig's theorem.
Optimal coloring is constructible in polynomial time with the standard algorithm: extend $\GT$ to a $\Delta$-regular bipartite multigraph by adding fake edges between sub-$\Delta$-degree vertices (Hall's theorem guarantees feasibility), then iteratively extract $\Delta$ perfect matchings through repeated bipartite maximum matching, stripping fake edges from each color.

\begin{table}[ht]
\centering
\begin{tabular}{lcrcc}
\toprule
Code family & QC? & $N$ & $\Delta(\GT)$ & K\"onig coloring tight? \\
\midrule
LP[$3{\times}4$ base, $L_{\text{lift}}=4$]   & yes & 784      & 8 & yes ($\chi' = 8$) \\
LP[$3{\times}4$ base, $L_{\text{lift}}=8$]   & yes & 3136     & 8 & yes ($\chi' = 8$) \\
random $(3,4)$-biregular, $n=16$, seed 1     & no  & 784      & 8 & yes ($\chi' = 8$) \\
random $(3,4)$-biregular, $n=24$, seed 3     & no  & 1764     & 8 & yes ($\chi' = 8$) \\
random $(3,4)$-biregular, $n=32$, seed 1     & no  & 3136     & 8 & yes ($\chi' = 8$) \\
random $(3,4)$-biregular, $n=48$, seed 2     & no  & 7056     & 8 & yes ($\chi' = 8$) \\
\midrule
\multicolumn{5}{l}{\emph{Extended sweep (this work, 2026-05):}} \\
LP[$3{\times}4$ base, $L_{\text{lift}}=15$]  & yes & 11{,}025  & 8 & yes ($\chi' = 8$) \\
random $(3,4)$-biregular, $n=60$, seeds 1, 2  & no  & 11{,}025  & 8 & yes ($\chi' = 8$) \\
LP[$3{\times}4$ base, $L_{\text{lift}}=20$]  & yes & 19{,}600  & 8 & yes ($\chi' = 8$) \\
random $(3,4)$-biregular, $n=80$, seeds 1, 2  & no  & 19{,}600  & 8 & yes ($\chi' = 8$) \\
LP[$3{\times}4$ base, $L_{\text{lift}}=25$]  & yes & 30{,}625  & 8 & yes ($\chi' = 8$) \\
random $(3,4)$-biregular, $n=100$, seeds 1, 2 & no  & 30{,}625  & 8 & yes ($\chi' = 8$) \\
LP[$3{\times}4$ base, $L_{\text{lift}}=35$]  & yes & 60{,}025  & 8 & yes ($\chi' = 8$) \\
random $(3,4)$-biregular, $n=140$, seeds 1, 2 & no  & 60{,}025  & 8 & yes ($\chi' = 8$) \\
LP[$3{\times}4$ base, $L_{\text{lift}}=50$]  & yes & 122{,}500 & 8 & yes ($\chi' = 8$) \\
random $(3,4)$-biregular, $n=200$, seeds 1, 2 & no  & 122{,}500 & 8 & yes ($\chi' = 8$) \\
\bottomrule
\end{tabular}
\caption{K\"onig-optimal coloring is achieved for both QC and non-QC HGP Tanner graphs across $N \in [784, 122{,}500]$. 
The polynomial-time algorithm yields $\chi' = \Delta = 2\delta_c = 8$ in every test case.
Greedy edge coloring is loose by 1--2 colors on random bipartite graphs but is not the relevant algorithm. The extended-sweep rows (lower block) push to $N = 122{,}500$, covering the engineering regime targeted by Xu et al.'s LP families.}\label{tab:konig-tight}
\end{table}

\begin{table}[ht]
\centering
\begin{tabular}{lcrrrrr}
\toprule
& & & & \multicolumn{3}{c}{per-cycle wall-clock (ms)} \\
\cmidrule(lr){5-7}
Code family & QC? & $N$ & $\chi'$ & Xu et al. Alg.~3 & Ours, motion & Ours, prestored \\
\midrule
LP[$3{\times}4$ base, $L_{\text{lift}}=8$] & yes & 3136 & 8 & 1.60 & 0.200 & 0.030 \\
random $(3,4)$-biregular, & no  & 3136 & 8 & 1.60 & 0.200 & 0.030 \\
$n=32$&&&&&\\
random $(3,4)$-biregular, & no  & 7056 & 8 & 1.60 & 0.200 & 0.030 \\
$n=48$&&&&&\\
\bottomrule
\end{tabular}
\caption{QC vs.\ non-QC per-cycle wall-clock comparison with K\"onig-optimal coloring throughout.
Values assume the demonstrated short-range per-cycle move of $\sim 200~\mu$s~\cite{bluvstein2022quantum,bluvstein2024logical}: Xu et al.'s $2\delta_c=8$ layers give $8\times200~\mu$s $= 1.6$~ms, while our scheme fires a single chromatic batch at $\Llay\geq\chi'$.
``Ours, motion'' assumes physical atom motion per chromatic batch with no pre-storing ($1\times200~\mu$s). ``Ours, prestored'' uses pre-stored AOL patterns at 30 $\mu$s per pattern switch (Bluvstein et al.~Methods).
(An earlier version of this table quoted a per-family worst-case \emph{long-range} cubic-spline motion model of $\sim 3$~ms per layer, giving $19$--$28$~ms for Xu and $2.6$--$5.9$~ms for our motion column.
Those figures overestimate the demonstrated short-range per-cycle cost by $\sim 15\times$ and are superseded here.)
Xu et al.'s Algorithm 3 applies to both QC and non-QC; QC structure simplifies trajectory geometry but not the $2\delta_c = 8$ layer count.
Our scheme achieves the same $\chi' = 8$ chromatic decomposition in both cases via K\"onig-optimal coloring.}\label{tab:nonqc}
\end{table}

\paragraph{Findings.}
We find that Xu et al.'s Algorithm 3 applies to non-QC HGP codes with no algorithmic modification. 
  Layer count remains $2\delta_c = 8$ for $(3,4)$-biregular bases.
  Wall-clock per cycle is $\approx 1.6$~ms across QC and non-QC at $N \sim 10^3$--$10^4$ using the demonstrated $\sim 200~\mu$s short-range move.
  We achieve $\chi' = 8$ in both cases via K\"onig-optimal coloring, retaining an $8\times$ advantage (pure step-count) in the pessimistic atom-motion scenario and a $\sim 50$--$300\times$ advantage in the pre-stored AOL scenario (Sec.~\ref{sec:wallclock-comparison}).
  K\"onig-optimal coloring gives $\chi' = 8$ regardless of QC structure.
  The non-QC cost is an offline preprocessing burden (computing the optimal coloring, generating $\chi'$ AOL trap patterns) paid once per code and amortized across syndrome rounds, minimizing non-QC overhead.

\section{Discussion and Future Work}\label{sec:discussion}

\subsection{Bravyi BB codes: direct measurement}\label{sec:bb-measurement}

The bivariate bicycle (BB) codes of Bravyi et al.~\cite{bravyi2024high} have Tanner graphs that decompose into two edge-disjoint planar subgraphs, forming a 2-layer overlay structure. Subsequent work develops their logical operators and fold-transversal gates~\cite{eberhardt2024logical}, connectivity-lowering morphing circuits~\cite{shaw2025lowering}, matching decoders~\cite{sahay2026matching}, modular architectures~\cite{yoder2025tour}, and layouts of (generalized/coprime) bicycle codes on cold atoms~\cite{wang2026coprime,viszlai2025matching}.
As previously stated, BB codes are constructed from two commuting circulant polynomials 
\[A(x,y), B(x,y) \in \mathbb{F}_2[x,y]/(x^l-1, y^m-1).\] 
Their routing-relevant Tanner graph spectrum is governed by the Fourier reduction of Theorem~\ref{thm:bb-reduction}.
The resulting $\chi'$ and $\beta_{\mathrm{BB}}$ values are collected in Table~\ref{tab:bb-codes}.

\begin{table}[ht]
\centering
\small
\begin{tabular}{lrrrrrr}
\toprule
Code (Bravyi Table 1) & $N_{\GT}$ & $\Delta(\GT)$ & $\chi'(\GT)$ (K\"onig) & $\lambda_1$ & $\beta_{\mathrm{BB}}$ & per cycle, $\Llay \geq 6$ \\
\midrule
$[[72, 12, 6]]$    &  144 & 6 & 6 & 6.000 & 0.667 & 1 \\
$[[90, 8, 10]]$    &  180 & 6 & 6 & 6.000 & 0.872 & 1 \\
$[[144, 12, 12]]$  &  288 & 6 & 6 & 6.000 & 0.828 & 1 \\
$[[288, 12, 18]]$  &  576 & 6 & 6 & 6.000 & 0.828 & 1 \\
\bottomrule
\end{tabular}
\caption{Direct spectral and chromatic measurements on Bravyi et al.'s published BB codes \cite{bravyi2024high}.
Every case is K\"onig-tight ($\chi' = \Delta = 6$). 
Per-cycle cost under our multi-layer AOL protocol saturates to $1$ AOL pattern activation at $\Llay \geq \chi'^{\mathrm{BB}} = 6$, lower than the saturation point for $(3,4)$-biregular HGP/LP ($\Llay \geq 8$). $\beta_{\mathrm{BB}}$ is the non-trivial bipartite spectral ratio of the Tanner graph (Perron pair excluded). 
Each $\beta_{\mathrm{BB}}$ value is reproduced to machine precision by the $2 \times 2$ Fourier reduction of Theorem~\ref{thm:bb-reduction}.}
\label{tab:bb-codes}
\end{table}

BB codes have $\chi' = 6$ vs.\ $\chi' = 8$ for $(3,4)$-biregular HGP/LP, shifting down the saturation point of the multi-layer-AOL: $\Llay = 6$ AOL planes suffice to collapse per-cycle cost to a single pattern activation.
Wall-clock improvement at $\Llay = 6$ is identical to the $\Llay = 8$ HGP/LP case, $\approx 50\text{--}300\times$ versus Xu et al.~\cite{xu2024constant} Algorithm~3 (Sec.~\ref{sec:wallclock-comparison}).
The trade-off is that BB codes have a smaller demonstrated code family than HGP/LP and lack the random-base flexibility we exploit in Sec.~\ref{sec:ram-frac} for LP code search.

\subsection{Wall-clock comparison on multi-layer 3D AOL}\label{sec:wallclock-comparison}

Our step-count comparison treats each reconfiguration uniformly, but Xu et al.'s atom rearrangements and our AOL pattern activations differ by more than an order of magnitude in wall time.
The critical quantity is the \emph{per-cycle} move time. Syndrome-cycle rearrangements are short-range data$\leftrightarrow$ancilla moves, whose demonstrated characteristic wall-clock is $\tau_\text{move} \approx 200~\mu$s ($0.55~\mu$m\,$\mu$s$^{-1}$ cubic-velocity profile)~\cite{bluvstein2022quantum,bluvstein2024logical}; equivalently, AOD trap transfers with atom motion are $50$--$100~\mu$s over $1$--$2~\mu$m.
The long-range $3$~ms cubic-spline time (0.7~ms AOD trap transfer plus 2.3~ms trajectory; Xu Methods Eq.~7, a $\sim\!500~\mu$m sweep) applies only to the one-time full-array \emph{setup} scrambling and to the Endres-2016-era rearrangement sweep~\cite{endres2016atom}, \emph{not} to the per-cycle short-range move; importing it into the per-cycle term is the misattribution corrected here.

From Bluvstein et al. Methods~\cite{bluvstein2024logical}, the demonstrated per-row Raman-AOD reprogramming time without atom motion is 5--8 $\mu$s per row, with ``several 10s of $\mu$s for an arbitrary pattern of rotations.''
The AWG synchronization jitter is $<10$ ns, confirming electronics are not the bottleneck (Raman pulse duration is).
For our scheme's application phase, atoms remain canonical and only AOL frequency patterns change, so the relevant cost is the per-pattern reprogramming time $\tau_\text{AOL} \in [5, 30]$ $\mu$s ($\sim 30$ $\mu$s conservative midrange).
We summarize the wall-clock comparison below:
\begin{align*}
T_\text{Xu et al., cycle}        &= 8 \times \tau_\text{move} = 8 \times 200\ \mu\text{s} \approx 1.6\ \text{ms} \\
T_\text{ours, cycle}\,(\Llay \geq 8)
                          &= 1 \times \tau_\text{AOL} \in [5, 30]\ \mu\text{s} \\
T_\text{ours, setup}/R    &= \Tval \times 3\ \text{ms} / R
\end{align*}
The underlying per-layer physical ratio $\tau_\text{move}/\tau_\text{AOL} \approx 200/[5\text{--}30] = 7\text{--}40\times$; combined with the $\chi'=8\to1$ chromatic-layer collapse (an architectural feature of the AOL stack absent from single-layer AOD), the per-cycle advantage is $1.6~\text{ms}/[5\text{--}30]~\mu\text{s} \approx \boldsymbol{50\text{--}300\times}$ in the canonical-position regime. A larger factor survives only where a per-cycle rearrangement genuinely requires full-array long-range transport.
The setup cost contributes $k_{\text{emp}} \log_2 N \cdot 3$ ms once per memory experiment (long-range moves, legitimately $\sim 3$~ms each), which becomes negligible per cycle for $R \gtrsim 10$, the considered regime of Xu et al.\ (memory threshold).
Figure~\ref{fig:wallclock} visualizes the per-cycle wall-clock as a function of $\Llay$ for the four code families considered here.

\begin{figure}[ht]
\centering
\includegraphics[width=0.95\linewidth]{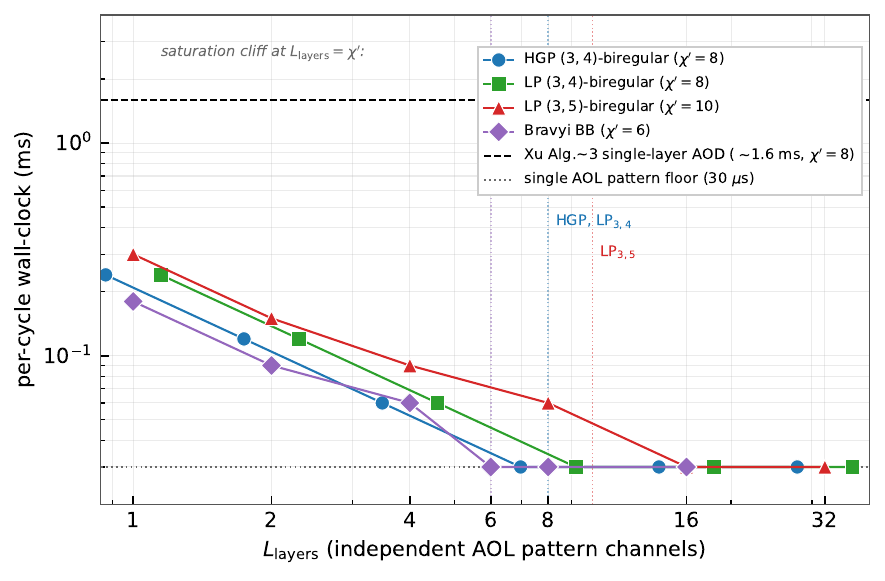}
\caption{Per-cycle wall-clock vs.\ number of stacked AOL pattern channels $\Llay$, for four qLDPC code families: HGP and LP from $(3,4)$-biregular bases ($\chi' = 8$), LP from $(3,5)$-biregular bases ($\chi' = 10$), and Bravyi BB codes ($\chi' = 6$, Sec.~\ref{sec:bb-measurement}).
Markers are measured per-cycle counts $\lceil \chi'/\Llay \rceil$ converted to wall-clock at the conservative $\tau_{\text{AOL}} = 30$ $\mu$s per pre-stored pattern from Bluvstein et al.~Methods.
The two horizontal references are Xu et al. Algorithm 3's $\sim 1.6$ ms / cycle on single-layer AOD ($8\times200~\mu$s short-range moves; dashed) and the single-AOL-pattern floor at $30$ $\mu$s (dotted).
Saturation cliffs at $\Llay = \chi'$ are marked with colored vertical lines: BB ($\chi' = 6$) saturates first, then HGP / LP-$3{\times}4$ ($\chi' = 8$), then LP-$3{\times}5$ ($\chi' = 10$).}
\label{fig:wallclock}
\end{figure}

\paragraph{Caveat: AOL crosstalk and L scaling.}
Bluvstein et al.\ demonstrate up to $\sim 80$ addressable gate sites in their entangling zone with ``minimal two-qubit cross-talk'' between sites~\cite{bluvstein2024logical}.
Corresponding parallel two-qubit entangling-gate fidelity has advanced from $99.5\%$ on up to 60 atoms in parallel~\cite{evered2023high} to a leakage-corrected $0.9971(5)$ for the time-optimal Rydberg CZ gate (with $0.999$ projected)~\cite{tsai2025benchmarking,peper2025spectroscopy}.
The maximum number of independently pre-stored AOL patterns that are simultaneously addressable without per-pattern recalibration is not quantified in the published work.
We estimate $\Llay \lesssim 10$ with current hardware before AOD intermodulation overhead becomes significant.
Scaling to $\Llay \geq 8$ for the speedup-saturating regime of $\chi'/\Llay = 1$ requires modest hardware extension.
If multi-pattern operation costs $f \in [3, 10]\times$ overhead per AOL switch, per-cycle wall-clock advantage decreases to $\sim 5$--$100\times$.

\paragraph{Hardware assumptions.} The wall-clock advantage in this regime depends on two assumptions extrapolated from current $\sim 80$-site demonstrations~\cite{bluvstein2024logical}:
(R2-A) \emph{No atom motion in the application phase}, i.e., the $\chi'$ CNOT layers are implemented as pure AOL pattern switches with no atom shuttling between switches, requiring the multi-layer AOL hardware to provide $\Llay$ independent, simultaneously-addressable pattern channels.
(R2-B) \emph{AOL pattern reprogramming at $\Llay = 8$ operates in 5--30 $\mu$s per pattern}; this is consistent with the 5--8 $\mu$s per row and ``several 10s of $\mu$s for an arbitrary pattern'' demonstrated by Bluvstein et al.~Methods. This assumption is backed by demonstrated 3D-AOD hardware: Picard \& Endres~\cite{picard2025three} report a three-dimensional acousto-optic deflector switching focus at 100 kHz ($10~\mu$s) with a $2~\mu$s rise time, and note that longitudinal-mode TeO$_2$ crystals ($4200$ m/s acoustic velocity) support $>\!600$ kHz ($\sim\!1.6~\mu$s) switching, placing our assumed $5$--$30~\mu$s window squarely within measured 3D-AOD performance. Scaling beyond $\Llay \approx 10$ may still incur AOD intermodulation overhead requiring per-pattern recalibration.

\subsection{Engineering recommendations by hardware generation}\label{sec:recommendations}

Table~\ref{tab:recommendations} translates per-cycle and setup costs derived above into per-generation design guidance for hardware groups building qLDPC architectures on neutral-atom platforms.

\begin{table}[ht]
\centering
\small
\begin{tabular}{p{0.16\linewidth}p{0.18\linewidth}p{0.18\linewidth}p{0.15\linewidth}p{0.22\linewidth}}
\toprule
Hardware generation & $\Llay$ available & Recommended code family & Per-cycle wall-clock & Caveats / required upgrades \\
\midrule
\textbf{Current} (Bluvstein et al. 2024 \cite{bluvstein2024logical}) & $1$--$2$ & $(3,4)$-biregular HGP or Bravyi BB & $\sim 1.6$ ms (matches Xu et al. Alg.~3 \cite{xu2024constant}) & No multi-layer advantage at $\Llay \leq 2$; both schemes equivalent at this generation \\
\textbf{Near-term} (2026--2028) & $4$ & $(3,4)$-biregular HGP / LP, or Bravyi BB & $\sim 0.06$ ms (pre-stored AOL) & $\sim 27\times$ pre-stored per-cycle wall-clock advantage at $\Llay = 4$ for $\chi' = 8$ HGP/LP ($4\times$ step-count in the atom-motion scenario); for BB ($\chi' = 6$) similar \\
\textbf{Target} (2028--2030) & $\geq 8$ for HGP/LP, $\geq 6$ for BB & $(3,4)$/$(3,5)$ HGP/LP or Bravyi BB & $\sim 30$ $\mu$s (single AOL pattern, saturated) & $8\times$ (HGP/LP) or $6\times$ (BB) step-count advantage, i.e.\ $\sim 50$--$300\times$ pre-stored wall-clock; saturation cliff at $\Llay = \chi'$ \\
\textbf{Long-term} (photonic, Sec.~\ref{sec:future-hw}) & $O(\log N)$ via photonic links & all of the above & $\sim 30$ $\mu$s per cycle; \emph{setup/asymptotic} $O(\log N)$ vs.\ Xu et al.'s $O(N^{1/4})$ & Entanglement rate and fault-tolerance threshold already met in proposals~\cite{li2024high,sinclair2025fault}; binding gap is parallel channel count ($1$--$2$ vs.\ $O(\log N)$) and photonic--array integration \\
\bottomrule
\end{tabular}
\caption{Engineering recommendations for qLDPC routing on neutral-atom hardware as a function of the number of independent AOL pattern channels $\Llay$ available. 
Per-cycle wall-clock numbers use the demonstrated $\sim 200~\mu$s short-range atom-move time~\cite{bluvstein2022quantum,bluvstein2024logical} (so Xu Alg.~3 $=8\times200~\mu$s $\approx1.6$~ms) and the $\sim 30$ $\mu$s pre-stored AOL pattern reprogramming time from Bluvstein et al. Methods~\cite{bluvstein2024logical} (conservative midrange of the demonstrated $5$--$30$ $\mu$s per pattern); the $\sim 3$~ms long-range cubic-spline time of Xu et al.~Methods~\cite{xu2024constant} Eq.~7 applies to the one-time setup scrambling.
All near-term and target rows require the hardware assumptions R2-A and R2-B (Sec.~\ref{sec:wallclock-comparison}).}
\label{tab:recommendations}
\end{table}

The most important practical message is that per-cycle advantage is a step-function in $\Llay$: zero at $\Llay = 1$, growing linearly to $\chi'$ at $\Llay = \chi'$, then constant beyond.
Hardware roadmaps may target $\Llay \in [6, 8]$ as the first regime where our scheme materially beats single-layer AOD on per-cycle wall-clock.

\subsection{Limitations}
 Both schemes are $O(1)$ per cycle in atom-array reconfiguration count. 
 Setup costs for multi-layer AOL scheme presented here grow as $O(\log N)$ but amortize over the rounds of a memory experiment. 
 The per-cycle speedup is constant in $N$, capped by $\chi' = 2\delta_c$.

We predict wall-clock advantage at moderate $N$, but this is heavily dependent on hardware assumptions. 
  (R2-A and R2-B above) currently demonstrated only at $\Llay \leq 2$. 
  Further, for non-quasi-cyclic HGP codes, the chromatic colors are arbitrary matchings rather than shifts.
  K\"onig-optimal coloring (constructible in polynomial time) gives $\chi' = 2\delta_c$ for both QC and non-QC bases, so the per-cycle bound $\lceil \chi'/\Llay \rceil$ is unchanged.
  The cost stays offline, where $\chi'$ arbitrary AOL trap patterns must be precomputed and stored, vs.\ one shift template for the QC case
  (Sec.~\ref{sec:nonqc-results}).
  
With no 3D parallelism, per-cycle step count does not change. 
The $\Llay\times$ speedup materializes at $\Llay \geq 2$ and saturates at $8\times$ for $\Llay \geq \chi' = 8$ on $(3,4)$-biregular bases. 
Wall-clock advantage is amplified by the short-range atom-motion vs.\ AOL-reprogramming time ratio (200 $\mu$s vs.\ 5--30 $\mu$s per Bluvstein et al.\ Methods, a $7$--$40\times$ per-layer ratio), giving the $\sim 50$--$300\times$ per-cycle wall-clock advantage figure at $\Llay = 8$ (Sec.~\ref{sec:wallclock-comparison}).

\subsection{Future hardware directions}\label{sec:future-hw}

A genuine asymptotic wall-clock advantage of $O(N^{1/4})$ over Xu et al. would emerge if the multi-layer overlay is implemented via photonic interconnects with $O(\mu s)$ switching that is independent of physical atom-motion distance.
The required hardware is far from current state of the art. 

First, we would require a per-pair photonic-mediated entanglement rate $\geq 333$ Hz at $\geq 99\%$ fidelity. This target is substantially more attainable than current cavity-mediated two-atom operations ($\sim\!52\%$ deterministic fidelity, rising to $\sim\!76\%$ with error detection~\cite{grinkemeyer2025error}) might suggest: cavity-coupled Yb proposals project a Bell-pair rate of $1.0\times10^5$ s$^{-1}$ at average fidelity $\approx 0.999$~\cite{li2024high} (a theory estimate). Fault tolerance is compatible with nonlocal Bell-pair error as high as $\sim 10\%$ without distillation~\cite{sinclair2025fault,ramette2024fault}, with multiplexed Bell-pair rates in the 1--50 MHz range. The entanglement-rate requirement is met in such proposals, making the binding constraint be fidelity compounded with parallel channel count and integration scale, described below.

Second, we would need $\Llay = O(\log N) \approx 14$ parallel photonic channels per atom array (current: 1--2)~\cite{li2025parallelized}, and networked integration across $\geq 10^4$ atoms. 
Array sizes have advanced rapidly toward this scale: 6,100 coherent qubits with 12.6 s coherence~\cite{manetsch2025tweezer} and continuous operation of a 3,000-qubit system~\cite{chiu2025continuous} are now demonstrated, alongside 3D atom-array assembly~\cite{barredo2018synthetic} and 3D acousto-optic transport~\cite{lu2026astigmatism,guo2025acousto}.

Summarily, the binding gap is parallel channel count ($1$--$2$ demonstrated vs.\ $O(\log N)\approx 14$ required) and the integration of photonic links with qLDPC-scale arrays.
As such, we frame this research as a long-term direction toward robust, industrial-scale fault-tolerant quantum computation rather than a near-term hardware target.
The theoretical advantage if these capabilities materialize is $T_\text{us}^\text{photonic}/T_\text{Xu et al.}^\text{AOD} = O(\log N / N^{1/4}) \to 0$ as $N \to \infty$.

On a single-layer AOD architecture (no AOL stack), both schemes share the same $\sim 8 \times 200~\mu$s $= 1.6$ ms per cycle (short-range moves), with the same constant.
The multi-layer 3D AOL architecture is the regime where our scheme provides per-cycle improvement, scaling as $\Llay$ until the cap at $\chi' = 8$ is reached.

A shifted-Cayley overlay analysis shows that a purely geometric construction with atom motion only cannot break the $N^{1/4}$ wall-clock barrier, regardless of how the layers are designed, for two reasons.
Dominant motion cost is the largest-shift layer, requiring full-array atom motion of distance $\sim\!\sqrt{n} = N^{1/4}$ under constant-acceleration transport.
The resulting union graph has spectral ratio $\beta$ that grows toward $1$ with $N$ ($\beta = 0.75$ at $n=16$, rising to $\beta = 0.89$ at $n=512$), so it is not asymptotically Ramanujan and the routing-depth bound of Eq.~\eqref{eq:papier1-bound} does not itself reach $O(\log N)$.
Escaping the barrier therefore requires a switching mechanism whose cost is independent of physical motion distance, such as the photonic interconnects discussed above.

\subsection{Future theory directions}

Alongside the apparent gap in hardware demonstration, confirmation of Conjecture~\ref{conj:tighter-c} would close the remaining $\sim 5\times$ gap between our analytical bound and empirical measurement.
We propose using a refined Chernoff bound that exploits the HGP path-decomposition structure (rows and columns are statistically independent under uniform permutation).

Theorem~\ref{thm:tighter-chernoff} holds conditionally on hypotheses (H1)--(H3). 
Hypotheses (H1) and (H3) are satisfied by the canonical row-then-column 2D routing scheme on HGP (Lemma~\ref{lem:H1} and the negative-association remark following Lemma~\ref{lem:H2}).
Hypothesis (H2) states that the per-source bound $\Pb(I_u^{(e)} = 1) \leq p$ satisfies $sp \leq \mu_0$ for an absolute constant $\mu_0 = O(1)$.
This is not satisfied by the canonical 2D scheme, where Lemma~\ref{lem:H2} gives $\mu_0 = O(\sqrt{N})$ instead. 
Numerical observation of $T_{\mathrm{Valiant}} \approx 0.5 \log_2 N$ at validated $N$ up to $10^5$ (Table~\ref{tab:k-emp}) is consistent with the conclusion of Theorem~\ref{thm:tighter-chernoff} at constant $c \approx 1$, which would follow from H2-satisfying canonical paths.
We leave constructing such a scheme rigorously for future work.

We propose a plausible construction, as a check-node-bridge routing where each canonical path traverses $O(\log N)$ check-node ``bridge'' edges chosen to exploit the spectral expansion of $\GT$ (Proposition~\ref{prop:spec-decomp}). 
Closing this problem would upgrade Theorem~\ref{thm:tighter-chernoff} from conditional to unconditional for HGP and would recover the $\sim 5\times$ analytical-vs-empirical gap currently noted as Conjecture~\ref{conj:tighter-c}.

\section{Conclusion}\label{sec:conclusion}

We present a closed-form characterization of the routing-relevant Tanner graph spectrum for HGP/LP qLDPC codes ($\bhgp = (1+\bbase)/2$ and $\DT = 2 \Dbase$), together with a structural decomposition of $\spec(\GT)$ into product modes and boundary modes (Proposition~\ref{prop:spec-decomp}). For multi-layer 3D AOL hardware we construct and numerically validate a routing protocol with per-syndrome-cycle depth $\Tval(G_{\text{union}}) + \lceil \chi'/\Llay \rceil$ under a quasi-cyclic hypothesis (Theorem~\ref{thm:amortization}).

Classical simulation of both schemes on identical HGP and LP codes (spanning $(3,4)$- and $(3,5)$-biregular bases) shows that Xu et al.'s pipelined Algorithm 3 requires $2\delta_c$ atom rearrangements per syndrome cycle (constant in $N$, where $\delta_c$ is the maximum degree of the base bipartite graph).
Our multi-layer scheme requires $\lceil 2\delta_c/\Llay \rceil$ AOL pattern activations per cycle.  
Using Xu et al.'s $(3,4)$-biregular parameters, the per-cycle step-count advantage at $\Llay \geq 8$ is $8\times$, capped by $\chi'(\GT) = 2\delta_c = 8$ (K\"onig's theorem on the bipartite Tanner graph). 
In wall-clock terms, the ratio between the demonstrated $\sim 200~\mu$s short-range per-cycle atom move (Bluvstein et al.\ Methods~\cite{bluvstein2022quantum,bluvstein2024logical}) and 5--30 $\mu$s AOL pattern reprogramming, combined with the $\chi'=8\to1$ layer collapse, yields a per-cycle wall-clock advantage of $\sim 50$--$300\times$ in the memory-experiment regime (a $7$--$40\times$ per-layer physical ratio $\times$ the $8\times$ chromatic collapse). 

Summarily, we provide a closed-form spectral characterization of HGP/LP Tanner graphs serving as a code-design tool, with the diameter identity $\DT = 2\Dbase$ and ratio $\bhgp = (1+\bbase)/2$. 
We recognize that the $\chi'(\GT) = 2\delta_c$ chromatic decomposition admits $\Llay$-fold parallelism on multi-layer 3D AOL hardware, with both schemes equally able to exploit this when reformulated. 
Though we give no hardware demonstration, we numerically validate that this multi-layer-AOL advantage is robust across QC and non-QC HGP/LP families.
K\"onig-optimal coloring (polynomial-time constructible) gives $\chi' = 2\delta_c$ for both QC and non-QC bases, so there is no per-cycle algorithmic non-QC penalty.

\section*{Data and code availability}

All numerical results, together with the code that generates every figure and table in this work and the shifted-Cayley overlay sweep discussed in Sec.~\ref{sec:future-hw}, are available in \href{https://github.com/jmcourtneyuga/hypergraph_routing}{this Github repository}.
The author used Claude (Anthropic) to assist in drafting code comments and documentation for the codebase.

\section*{Competing interests}

The author declares no competing interests.

\appendix

\section{Block-SVD of the HGP Tanner biadjacency}\label{app:blockSVD}

We prove Proposition~\ref{prop:spec-decomp} by an explicit singular-value decomposition of the qubit-to-check biadjacency $B$ of $\HGP[H,H]$.

\paragraph{Biadjacency block form.} Let $H$ be $m\times n$. The HGP code has qubit sectors $\mathcal{Q}_0 = V\times V$ ($\dim n^2$) and $\mathcal{Q}_1 = C\times C$ ($\dim m^2$), and check sectors $\mathcal{X} = C\times V$ ($\dim mn$, X-checks) and $\mathcal{Z} = V\times C$ ($\dim nm$, Z-checks). With the stabilizer matrices $H_X = (H\otimes I_n \mid I_m\otimes H^{T})$ and $H_Z = (I_n\otimes H \mid H^{T}\otimes I_m)$ (Tillich--Z\'emor~\cite{tillich2013quantum}), the biadjacency $B = (H_X^{T}\mid H_Z^{T})$, arranged as (qubit sector) rows $\times$ (check sector) columns, is
\begin{equation}\label{eq:app-B}
B = \begin{pmatrix} H^{T}\otimes I_n & I_n\otimes H^{T} \\[2pt] I_m\otimes H & H\otimes I_m \end{pmatrix},
\end{equation}
where the row blocks are $\mathcal{Q}_0,\mathcal{Q}_1$ and the column blocks are $\mathcal{X},\mathcal{Z}$.

\paragraph{The Gram matrix $BB^{T}$.} Using the mixed-product rule $(A\otimes B)(C\otimes D) = (AC)\otimes(BD)$ throughout, the diagonal blocks are
\[
(BB^{T})_{00} = (H^{T}H)\otimes I_n + I_n\otimes(H^{T}H), \qquad
(BB^{T})_{11} = (HH^{T})\otimes I_m + I_m\otimes(HH^{T}),
\]
and the off-diagonal block is, using $(H^{T}\otimes I_n)(I_m\otimes H^{T}) = H^{T}\otimes H^{T}$ and $(I_n\otimes H^{T})(H^{T}\otimes I_m) = H^{T}\otimes H^{T}$,
\[
(BB^{T})_{01} = (H^{T}\otimes I_n)(I_m\otimes H^{T}) + (I_n\otimes H^{T})(H^{T}\otimes I_m) = 2\,H^{T}\otimes H^{T},
\qquad (BB^{T})_{10} = 2\,H\otimes H .
\]
This is the factor of $2$ quoted in the main text; it arises because both the $\mathcal{X}$- and $\mathcal{Z}$-check columns of $B$ couple $\mathcal{Q}_0$ to $\mathcal{Q}_1$ identically.

\paragraph{Diagonalization.} Take the SVD $H = \sum_{k=1}^{r}\sigma_k\, u_k w_k^{T}$ with orthonormal left/right singular vectors $u_k\in\mathbb{R}^m$, $w_k\in\mathbb{R}^n$, so \[Hw_k = \sigma_k u_k\text{ , }H^{T}u_k = \sigma_k w_k\text{ , }H^{T}H\,w_k = \sigma_k^2 w_k\text{ , }HH^{T}u_k = \sigma_k^2 u_k.\] Extend $\{w_k\}$ to an orthonormal basis of $\mathbb{R}^n$ by kernel vectors ($\sigma=0$) and $\{u_k\}$ likewise by cokernel vectors. For an ordered pair $(i,j)$ with $\sigma_i,\sigma_j>0$, the two-dimensional subspace $\mathcal{S}_{ij} = \mathrm{span}\{\,w_i\otimes w_j\ (\in\mathcal{Q}_0),\ u_i\otimes u_j\ (\in\mathcal{Q}_1)\,\}$ is invariant under $BB^{T}$:
\[
(BB^{T})_{00}(w_i\otimes w_j) = (\sigma_i^2+\sigma_j^2)\,w_i\otimes w_j, \qquad
(BB^{T})_{01}(u_i\otimes u_j) = 2\,(H^{T}u_i)\otimes(H^{T}u_j) = 2\sigma_i\sigma_j\, w_i\otimes w_j,
\]
and symmetrically on $\mathcal{Q}_1$. Hence
\[
BB^{T}\big|_{\mathcal{S}_{ij}} = \begin{pmatrix} \sigma_i^2+\sigma_j^2 & 2\sigma_i\sigma_j \\ 2\sigma_i\sigma_j & \sigma_i^2+\sigma_j^2 \end{pmatrix},
\qquad \text{with eigenvalues } (\sigma_i+\sigma_j)^2 \text{ and } (\sigma_i-\sigma_j)^2 .
\]
Thus $B$ has singular values $\sigma_i+\sigma_j$ and $|\sigma_i-\sigma_j|$ (the \emph{product modes}). If one factor is a kernel/cokernel direction ($\sigma_j = 0$), the coupling term $2\sigma_i\sigma_j$ vanishes and $w_i\otimes w_{j_0}$ (respectively $u_i\otimes u_{j_0}$) is an eigenvector of $BB^{T}$ with eigenvalue $\sigma_i^2$, i.e.\ a singular value $\sigma_i$ of $B$ (the \emph{boundary modes}). If both factors are kernel/cokernel directions the eigenvalue is $0$ (zero modes). These subspaces are mutually orthogonal and their dimensions sum to $n^2+m^2 = \dim(\mathcal{Q}_0\oplus\mathcal{Q}_1)$, exhausting $\mathrm{sv}(B)$.

Since $\mathrm{spec}(A_{\GT}) = \{\pm\sigma : \sigma\in\mathrm{sv}(B)\}\cup\{0\}$, the Tanner adjacency spectrum is contained in
\[
\{\pm(\sigma_i\pm\sigma_j) : 1\le i,j\le r\}\ \cup\ \{\pm\sigma_k : 1\le k\le r\}\ \cup\ \{0\},
\]
being the multiset~\eqref{eq:spec-multiset}, proving Proposition~\ref{prop:spec-decomp}. The Perron value is $2\sigma_1$ (pair $(1,1)$), and when $\sigma_1$ is simple it is the unique product mode equal to $2\sigma_1$, which is used in the proof of Theorem~\ref{thm:closed-form}. 

\bibliography{p3_refs}

\end{document}